\documentclass[conference]{IEEEtran}

\if CLASSOPTIONcompsoc
  \usepackage[nocompress]{cite}
\else
  \usepackage{cite}
\fi

\usepackage[utf8x]{inputenc}
\usepackage{listings}
\usepackage{color}
\usepackage{epsfig}
\usepackage{url}
\usepackage{amsmath}
\usepackage{cite}
\usepackage{graphicx}
\usepackage{color}
\usepackage[bookmarks=false]{hyperref}
\usepackage{amssymb,amsmath,dsfont}
\usepackage{amsthm}
\usepackage[linesnumbered,lined, ruled]{algorithm2e}
\usepackage{epstopdf}
\usepackage{lipsum}
\usepackage{bbm}
\usepackage{tablefootnote}

\usepackage{tikz}
\usetikzlibrary{automata,arrows,positioning,calc}

\usepackage{multirow}

\usepackage{breqn}
\SetAlFnt{\footnotesize}

\usepackage{graphics}

\newtheorem{theorem}{Theorem}[section]
\newtheorem{corollary}{Corollary}[theorem]
\newtheorem{lemma}{Lemma}[section]

\theoremstyle{remark}
\newtheorem*{remark}{Remark}

\usepackage{csquotes}
\usepackage{amsmath} 
\usepackage{amssymb}
\usepackage{subcaption}
\usepackage{graphicx}
\usepackage{accents}
\usepackage{mathtools}
\usepackage[normalem]{ulem}

\SetCommentSty{mycommfont}


\graphicspath{figures/}

\usepackage{mathtools}

\SetKwRepeat{Repeat}{repeat}{until}
\SetKwRepeat{Forever}{repeat}{forever}
\SetKwRepeat{On}{on}{end}
\SetNlSty{bfseries}{\color{black}}{}

\setlength{\textfloatsep}{5pt plus 1.0pt minus 2.0pt}
\setlength{\belowcaptionskip}{-5pt}
\hyphenation{different availability}

\begin{document}

\title{Latency and Backlog Bounds in Time-Sensitive Networking with Credit Based Shapers and Asynchronous Traffic Shaping}


\author{\IEEEauthorblockN{Ehsan Mohammadpour, Eleni Stai, Maaz Mohiuddin, Jean-Yves Le Boudec\\}
\IEEEauthorblockA{\'Ecole Polytechnique F\'ed\'erale de Lausanne, Switzerland\\
$\{$firstname.lastname$\}$@epfl.ch}}

\maketitle

\begin{abstract}
We compute bounds on end-to-end worst-case latency and on nodal backlog size for a per-class deterministic network that implements Credit Based Shaper (CBS) and Asynchronous Traffic Shaping (ATS), as proposed by the Time-Sensitive Networking (TSN) standardization group.
ATS is an implementation of the Interleaved Regulator, which reshapes traffic in the network before admitting it into a CBS buffer, thus avoiding burstiness cascades. 
Due to the interleaved regulator, traffic is reshaped at every switch, which allows for the computation of explicit delay and backlog bounds.
Furthermore, we obtain a novel, tight per-flow bound for the response time of CBS, when the input is regulated, which is smaller than existing network calculus bounds. We also compute a per-flow bound on the response time of the interleaved regulator. Based on all the above results, we compute bounds on the per-class backlogs. Then, we use the newly computed delay bounds along with recent results on interleaved regulators from literature to derive tight end-to-end latency bounds and show that these are less than the sums of per-switch delay bounds.

%
%
\end{abstract}

\IEEEpeerreviewmaketitle

\section{Introduction}
\label{sec:intro}

Time-Sensitive Networking (TSN) is an emerging IEEE standard of the 802.1 Working Group which defines mechanisms for bounded end-to-end latency and zero packet loss \cite{_time-sensitive_task}. It specifies a number of per-class queuing, scheduling and shaping mechanisms. Because the mechanisms are per-class, one key issue in this context is how to deal with the burstiness cascade: individual flows that share a resource dedicated to a class may see their burstiness increase, which may in turn cause increased burstiness to other flows downstream of this resource. Computing latency upper bounds for per-class networks is difficult, unless flows are reshaped at every hop \cite{charny2000delay,bennett2002delay,boyer2008tightening,bouillard2012exact}. This is why a TSN proposal is to reshape flows at every hop, using the concept of interleaved regulator introduced in \cite{specht_urgency-based_2016} and analyzed in \cite{le_boudec_theory_2018} (called ``Asynchronous Traffic Shaping", ATS, within TSN). An interleaved regulator reshapes individual flows 
without per-flow queuing.

In \cite{specht_urgency-based_2016}, an end-to-end delay bound is computed for a network of FIFO constant rate servers with aggregate multiplexing that uses interleaved regulators to avoid the burstiness cascade. However, this does not account for the multi-class nature of a TSN network and for a representative combination of queuing and scheduling mechanisms proposed by TSN, specifically for the scheme called Credit Based Shaper (CBS). The first goal of this paper is to extend these calculations to a more generic TSN network. However, the calculations in \cite{specht_urgency-based_2016} are very complex; extending them seems to be intractable unless some higher level of abstraction is used, as described below. The second goal of this paper is to provide backlog bounds, which can be used to dimension buffers.

To address these goals, we use classic network calculus concepts such as a service-curve characterization of CBS and extend the results in \cite{azua_complete_2014} to include high-priority control-data traffic (CDT). We combine this with the max-plus representation of interleaved regulators proposed in \cite{le_boudec_theory_2018}. Further, we use the result of~\cite{le_boudec_theory_2018} that the upper bound on the delay in the combination of an interleaved regulator following a FIFO system is no greater than the upper bound on the delay of the FIFO system. Overall, in this paper we compute delay upper bounds for the CBS, the interleaved regulator and end-to-end delay bounds along with backlog bounds for the first two. Our main contributions are listed below.
	
i) We obtain a service curve for every AVB class at a CBS system, extending a similar result in \cite{azua_complete_2014} by accounting for the presence of CDT (Theorem \ref{thm:cbfs_delay}). The service curves are used to decouple the interleaved regulator from CBS and are essential to obtain the other results mentioned below.

ii) We obtain a novel, tight bound for the \emph{response} time at a CBS subsystem when the input traffic is reshaped by an interleaved regulator (Theorem \ref{thm:response_CBFS}).
    	
iii) Using this bound and that an interleaved regulator does not increase the delay bound of a FIFO system~\cite{le_boudec_theory_2018}, we obtain a delay bound for the interleaved regulator (Theorem \ref{thm:response_time}).
    	
iv) We use the delay bound of the interleaved regulator to derive a service curve for the interleaved regulator and hence a backlog bound at the interleaved regulator.

v) We are the first to compute a tight end-to-end latency bound for a TSN network of this kind. We show that the end-to-end latency bound obtained is less than the sum of delay bounds computed at every switch along the path of a flow. Ignoring this, as is often done, leads to a gross overestimation of the worst-case end-to-end latency.
Section~\ref{sec:model} describes the system model. Section~\ref{sec:delay} provides: a service curve for the CBS subsystem; a novel tight bound on the response time in the CBS subsystem; a delay bound for the interleaved regulator; and a tight end-to-end delay bound. Section~\ref{sec:buffer} uses these results to derive backlog bounds. Section \ref{sec:example} provides case studies, shows the tightness of the bounds and the sub-additivity of the end-to-end delay bound. Section \ref{sec:conclusion} concludes the paper. 
\section{System Model}
\label{sec:model}

We consider a network with a set $\mathcal{S}$ of nodes (switches and hosts) along with a set of flows, $F$, between hosts. 
Hosts are sources or destinations of flows. There are four types of flows, namely, control-data traffic (CDT), class A, class B, and best effort (BE) \cite{thangamuthu_analysis_2015} in decreasing order of priority. Flows of classes A and B are together referred to as Audio-Video Bridging (AVB) flows, as mentioned in \cite{avb_ieee_2011,azua_complete_2014}. We focus on delay and backlog bounds for AVB traffic. We assume a subset of TSN functions as described next.
%
%

\subsection{Architecture of a TSN node}
\label{subsec:architecture}
\begin{figure}
		\centering
		\includegraphics[width=0.65 \linewidth]{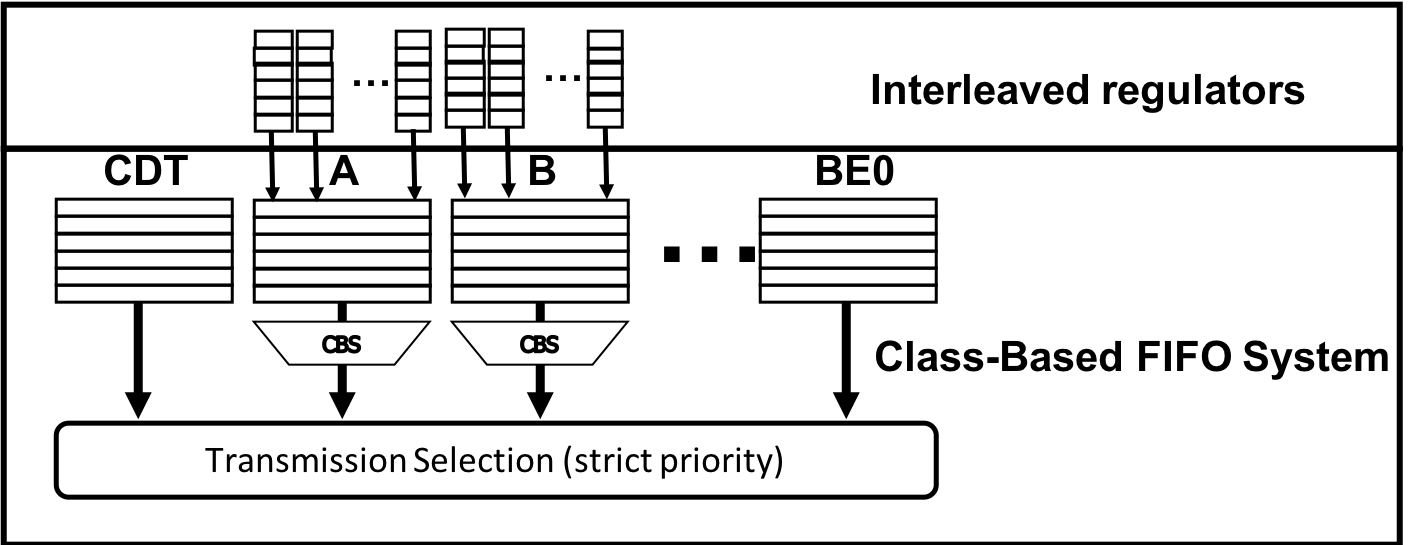}
		\caption{Architecture of one TSN node output port.}
		\label{fig:TSN_switch}
	\end{figure}

We assume that contention occurs only at the output port of a TSN node. Each node output port performs per-class scheduling with eight classes: one for CDT, one for class A traffic, one for class B traffic, and five for BE traffic denoted as BE$_0$-BE$_4$ (TSN standard \cite{_time-sensitive_task}). In addition each node output port also performs per-flow regulation for AVB flows using an interleaved regulator. Thus, at each output port of a node, there is one interleaved regulator per-input port and per-class  \cite{le_boudec_theory_2018,specht_urgency-based_2016}. The detailed picture of scheduling and regulation at a node output port is given by Fig. \ref{fig:TSN_switch}. 
The packets received at a node input port for a given class are enqueued in the respective interleaved regulator at the output port. Then, the packets from all the flows, including CDT and BE flows, are enqueued in a class based FIFO system (CBFS). 

The CBFS includes two CBS subsystems \cite{_ieee_qbv}, one for each class A and B. As defined in \cite{_time-sensitive_task,_ieee_qbv}, the CBS serves a packet from a class according to the available credit for that class. The CDT and BE$_0$-BE$_4$ flows in the CBFS are served by separate FIFO subsystems. Then, packets from all flows are served by a transmission selection subsystem that serves packets from each class based on its priority. All subsystems are non-preemptive.

Guarantees for AVB traffic can be provided only if CDT traffic is bounded; we assume that the CDT traffic from node $i$ to node $j$ has an affine arrival curve $r_{ij} t +b_{ij}$. How to derive such arrival curves involves other TSN mechanisms and is outside the scope of this paper.

\begin{figure}
	\centering
	\includegraphics[width=0.8\linewidth]{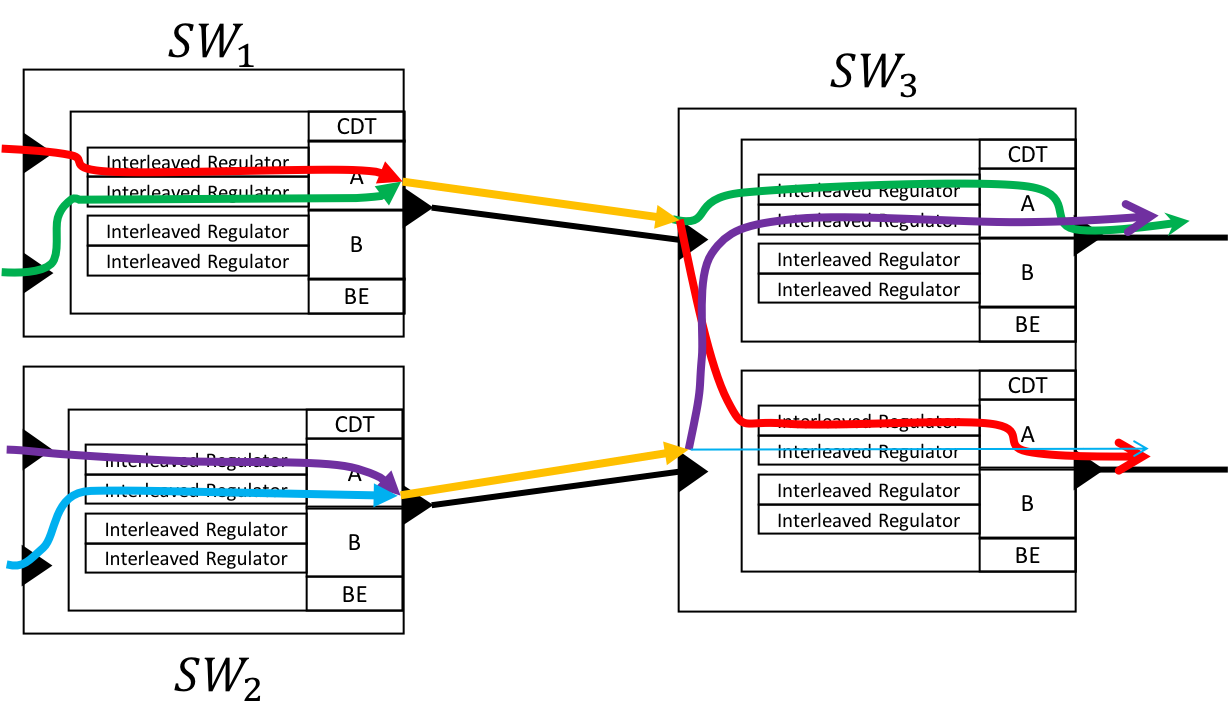}
	\caption{Illustration of the queuing policy by TSN switches for four flows of class A.}
	\label{fig:illustration}
\end{figure}

Fig. \ref{fig:illustration} shows a part of a TSN network with three switches serving four flows of class A. In switches $SW_1$ and $SW_2$ two flows are coming from two different input ports, thus, they use different interleaved regulators. The flows entering switch $SW_3$ from switch $SW_1$ are going to different output ports, and use different interleaved regulators.


\subsection{Flow Regulation}
\label{subsec:interleaved_regualtors}

Following \cite{specht_urgency-based_2016}, we assume that flows are regulated at their source, according to either leaky bucket (LB) or length rate quotient (LRQ).
The LB-type regulation forces flow $f$ to conform to the arrival curve $r_f t+b_f$. The LRQ-type regulation with rate $r_f$ ensures that the time separation between two consecutive packets of sizes $l_n$ and $l_{n+1}$ is at least $l_{n}/r_f$. Note that if flow $f$ is LRQ-regulated, it satisfies the arrival curve constraint $r_f t+L_f$ where $L_f$ is its maximum packet size (but the converse may not hold). For an LRQ regulated flow we set $b_f=L_f$. We also call $M_f$ the minimum packet size of flow $f$. We assume that, at the source hosts, the traffic satisfies its regulation constraint, i.e. we can ignore the delay due to interleaved regulator at hosts.

At every switch inside the network, flows are reshaped inside the interleaved regulator, as described in \cite{le_boudec_theory_2018}. The regulation type and parameters for a flow are the same at its source and at all switches along its path.

\vspace{-0.1in}
\subsection{Other Notation and Definitions}
\label{subsec:notation}
\begin{figure}
	\centering
	\includegraphics[width=1 \linewidth]{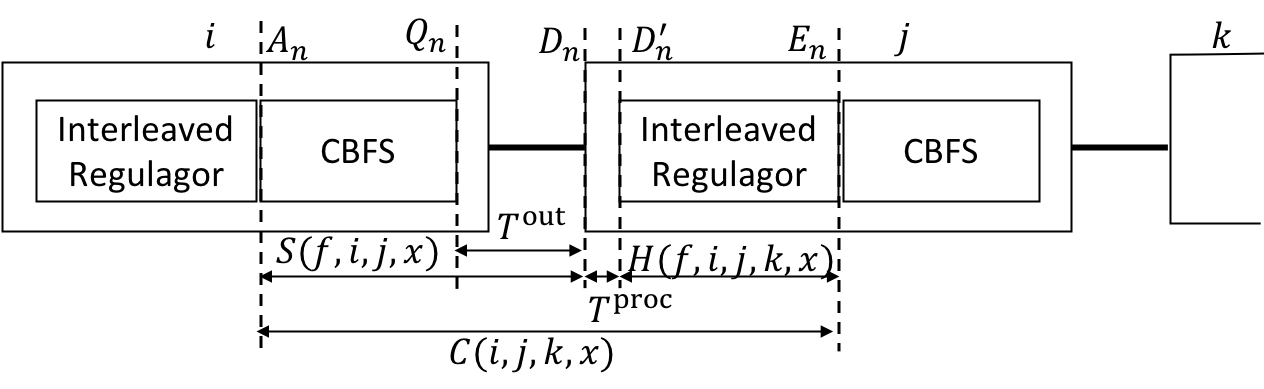}
	\caption{Timing Model in TSN}
	\label{fig:delay_details}
\end{figure}

The indices for nodes, e.g., $i,j,k$ lie in $\left[1, |\mathcal{S}| \right]$.
A directed link from node $i$ to $j$ is denoted by $(i, j)$ with a capacity of $c_{ij}$.
Also, $n \in \mathbb{N}\setminus 0$ is used as an index for packets, $f \in F$ is used as an index for flows, and $x \in \{A, B, E\}$ is used as an index for AVB classes $A$ and $B$, and BE flows, respectively. The set of packets belonging to flow $f$ is $N_f$. The set of flows of class $x$ going from node $i$ to node $j$ is denoted by $F_{ij}^{x}$ and those that continue to node $k$, by $F_{ijk}^{x}$. The flows in $F_{ijk}^{x}$ use the same CBFS in node $i$ and interleaved regulator in node $j$. As mentioned in Section \ref{subsec:architecture}, for each output port, there is a per-class per-input-port interleaved regulator. Thereby, the interleaved regulator in node $j$ connected to link $(j,k)$ indicates an output port of node $j$ connected to node $k$. The maximum packet size of class $x$ flows from node $i$ to $j$ is given by $L_{ij}^{x}$.

Fig. \ref{fig:delay_details} shows the various delays of a packet $n$ of a flow in $F_{ijk}^{x}$. We see five important time instants: (1) $A_n$ is the arrival time of packet $n$ in CBFS, (2) $Q_n$ is the time that packet $n$ starts transmission from CBFS, (3) $D_n$ is the time that packet $n$ is received at a node, (4) $D'_n$ is the time that packet $n$ is enqueued in the interleaved regulator, and (5) $E_n$ is the time that packet $n$ leaves the interleaved regulator.

$(D'_n-D_n)$ is the processing time at node $j$, which is defined as the delay from the reception of the last bit of a packet, coming from node $i$, to the time the packet is enqueued at the interleaved regulator. We assume that $D'_n-D_n \in \left[ T^{\text{proc,min}}_{ij}, T^{\text{proc,max}}_{ij} \right]$. $(D_n-Q_n)$ is the output delay for packet $n$ traversing form node $i$ to node $j$, which is defined as the time required from the selection of a packet for transmission from a CBFS queue of node $i$ to the reception of the last bit of the packet by the node $j$. Also, $D_n-Q_n=l_n/c_{ij}+T^{\text{var},n}_{ij}$, where $l_n$ is the length of packet $n$  
and  $T^{\text{var},n}_{ij}$ 
is in $\big[T^{\text{var, min}}_{ij}, T^{\text{var, max}}_{ij}\big]$.

We compute the following bounds for packets of AVB flows belonging to class $x$ going from node $i$ to $j$ (see Fig. \ref{fig:delay_details}):


	$\bullet$ $S(f,i,j,x)$: upper bound on the response time for flow $f$ in CBFS, i.e. on $(D_n-A_n)$;
%

	$\bullet$ $H(f,i,j,k,x)$: upper bound on the response time for flow $f$ in the interleaved regulator at node $j$ 's output port for link $(j,k)$, i.e. a bound on $(E_n-D'_n )$; 

$\bullet$ $C(i,j,k,x)$: upper bound for all flows on the response time in the combination of the CBFS at node $i$ and the interleaved regulator at node $j$ for link $(j,k)$, i.e. a bound on $(E_n-A_n)$ for all flows.

\vspace{-0.01in}
\section{Delay Bounds in TSN}\label{sec:delay}
The aim of this section is the computation of bounds on the delays an AVB flow experiences due to CBFS, $S(f,i,j,x)$, and interleaved regulator at a node, $H(f,i,j,k,x)$. To do so, in Section \ref{subsec:service_cbfs}, we first derive a service curve of CBFS for an AVB flow, in presence of CDT with an LB arrival curve. Then, in Section \ref{sec:FIFO_delay}, we use this service curve to compute a bound on the response time for an AVB flow in the CBFS of a node, i.e., $S(f,i,j,x)$. Using $S(f,i,j,x)$, we compute $C(i,j,k,x)$. Consequently, we can compute the bound on the delay of the interleaved regulator, $H(f,i,j,k,x)$ in Section \ref{sec:intreg}, and therefore we have all the elements to compute a bound on the delay of a single TSN node. We also compute a tight end-to-end delay bound for an AVB flow in Section \ref{subsec:e2e}.

 \vspace{-0.02in}
\subsection{Service Curve Offered by CBFS to AVB flows}
\label{subsec:service_cbfs}
The following theorem provides service curves offered by a CBFS at a TSN node, for AVB flows in presence of CDT flows with LB arrival curve. In \cite{azua_complete_2014}, the authors compute service curves for AVB flows according to the IEEE AVB standard \cite{avb_ieee_2011}, i.e., in absence of CDT. Note that service curves for AVB flows in TSN are proposed in \cite{zhao_timing_analysis_2018}; however in their proof credit reset is not considered. We obtain different service curves than \cite{zhao_timing_analysis_2018} and we use them to obtain tight delay bounds.
\vspace{-0.02in}
	\begin{theorem}
	\label{thm:cbfs_delay}
	Assume a node $i$ and a link $(i,j)$, where the CDT has an LB arrival curve with parameters $(r_{ij}, b_{ij})$ and the line rate is $c_{ij}$. Then, the CBFS offers to class $A$ flows a rate-latency service curve with parameters,
	
	\begin{align}\label{eq:service_curve_A}
	T_{ij}^A &= \frac{1}{c_{ij}-r_{ij}}\Big(\bar{L}^A_{ij}+b_{ij}+\frac{r_{ij}\bar{L}_{ij}}{c_{ij}}\Big), \\
	R_{ij}^A &= 	\frac{I^A_i (c_{ij}-r_{ij}) }{I^A_{ij}-S^A_{ij}},
	\end{align}\noindent where $I^A_{ij}$ and $S^A_{ij}$ are the \emph{idle slope} and \emph{send slope}, correspondingly, of the CBS for class $A$ and link $(i,j)$, $\bar{L}^A_{ij} = \max(L^B_{ij}, L^E_{ij})$, and $\bar{L}_{ij} = \max(L^A_{ij},L^B_{ij}, L^E_{ij})$. Similarly for class $B$ flows, CBFS offers a rate-latency service curve with parameters,
	\begin{align}\label{eq:service_curve_B}
	T_{ij}^B &= \frac{1}{c_{ij}-r_{ij}}\Biggl(L^E_{ij}+L^A_{ij}-\frac{\bar{L}^A_{ij} I^A_{ij}}{S^A_{ij}} +b_{ij}+\frac{r_{ij}\bar{L}_{ij}}{c_{ij}}\Biggr), \\
	R_{ij}^B &= 	\frac{I^B_{ij} (c_{ij}-r_{ij}) }{I^B_{ij}-S^B_{ij}},
	\end{align}\noindent where $I^B_{ij}$ and $S^B_{ij}$ are the \emph{idle slope} and \emph{send slope}, correspondingly, of the CBS for class $B$ and link $(i, j)$.
\end{theorem}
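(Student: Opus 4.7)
The plan is to apply the network-calculus busy-period technique to each AVB class, combining (i) the link-capacity constraint, (ii) the affine arrival curve of CDT, (iii) the CBS credit dynamics, and (iv) non-preemption blocking by other classes. This extends the derivation of \cite{azua_complete_2014}, which handles CBS without CDT.

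For class A, I would fix an observation time $t$, let $s$ be the start of the class-A backlog period containing $t$, and track the class-A credit $c^A$ on $(s,t]$. Throughout this interval A is backlogged, so $dc^A/du=I^A_{ij}$ when A is not in service and $dc^A/du=S^A_{ij}$ when A is in service; integrating gives $c^A(t)-c^A(s)=I^A_{ij}(t-s)-(I^A_{ij}-S^A_{ij})\,w^A(t)$, where $w^A(t)$ denotes the cumulative A-service time in $(s,t]$. Combined with $R^A(t)-R^A(s)=c_{ij}w^A(t)$ and with uniform bounds on the initial and final credits during backlog, this already gives an affine lower bound on A's output with slope $I^A_{ij}c_{ij}/(I^A_{ij}-S^A_{ij})$, which is exactly the CBS rate for A in the absence of CDT.

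To inject CDT and non-preemption I would additionally use that, during $(s,t]$, the link must also transmit at most $r_{ij}(t-s)+b_{ij}$ bits of CDT (from its LB arrival curve) and at most one in-service lower-priority packet of size $\bar{L}^A_{ij}=\max(L^B_{ij},L^E_{ij})$. Plugging these into the capacity inequality and eliminating $w^A(t)$ via the credit equation converts the lower bound into the rate-latency curve $(R^A_{ij},T^A_{ij})$: the factor $(c_{ij}-r_{ij})$ in $R^A_{ij}$ is the long-run link share left after CDT, while the term $r_{ij}\bar{L}_{ij}/c_{ij}$ inside $T^A_{ij}$ accounts for the CDT that can accumulate during the transmission of a single maximum-size packet of any class.

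For class B the argument is structurally the same, but the blocking term $\bar{L}^A_{ij}$ must be replaced by the maximum number of class-A bits that can be served ahead of B during B's backlog. Using the CBS output characterization of class A one obtains the bound $L^A_{ij}-\bar{L}^A_{ij}I^A_{ij}/S^A_{ij}$ for this quantity, which is exactly the additional additive term appearing in $T^B_{ij}$; the rest of the derivation, with $I^B_{ij}$ and $S^B_{ij}$ replacing $I^A_{ij}$ and $S^A_{ij}$, produces $R^B_{ij}$. The main obstacle is the credit bookkeeping at the boundaries of the backlog period: one must exhibit a single worst-case trajectory in which the extremal initial and final credits are simultaneously attainable under the TSN credit-reset rule (the point on which our bounds differ from those of \cite{zhao_timing_analysis_2018}), and the class-A output burst used in the class-B analysis must be derived in a way consistent with the same rule; matching these two is what makes the latency constants tight.
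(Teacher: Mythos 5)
Your overall strategy (a credit-based busy-period analysis extending \cite{azua_complete_2014}, with the latency term $r_{ij}\bar{L}_{ij}/c_{ij}$ traced to CDT accumulating behind one non-preempted lower-priority packet, and the class-B blocking term traced to the worst-case class-A service) is the same as the paper's. However, there is a genuine gap in the credit bookkeeping. You integrate the credit as $c^A(t)-c^A(s)=I^A_{ij}(t-s)-(I^A_{ij}-S^A_{ij})w^A(t)$, i.e.\ you assume the credit grows at $I^A_{ij}$ whenever class A is backlogged but not in service, \emph{including while CDT is being transmitted}. In the model the paper analyzes, the credit is \emph{frozen} during CDT transmission. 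The paper therefore splits $(s,t]$ into three components, $\Delta t^{x+}$, $\Delta t^{x-}$ and a frozen part $\Delta t^{x0}$, which yields $w^A=\Delta t^{x-}=\bigl(I^x\Delta t-V^x(t)-I^x\Delta t^{x0}\bigr)/(I^x-S^x)$; the extra term $-I^x\Delta t^{x0}$ is precisely how CDT degrades the service. It then bounds $\Delta t^{x0}$ by the CDT output over $(s,t]$, obtained by deconvolving the LB arrival curve with the CDT service curve $c[t-\bar{L}/c]^+$, giving $\Delta t^{x0}\le\bigl(b+r(\Delta t+\bar{L}/c)\bigr)/c$; substituting produces both the $(c_{ij}-r_{ij})$ factor in $R^A_{ij}$ and the $b_{ij}+r_{ij}\bar{L}_{ij}/c_{ij}$ part of $T^A_{ij}$.

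Your proposed repair --- injecting CDT through the link-capacity inequality --- does not close this gap: the capacity constraint bounds $w^A(t)$ from \emph{above}, whereas a service curve requires a lower bound on $cw^A(t)$. With your credit equation alone the derivation terminates at rate $c_{ij}I^A_{ij}/(I^A_{ij}-S^A_{ij})$ with no dependence on $r_{ij}$, which is strictly larger than the claimed $R^A_{ij}$ and is not a valid service curve for the frozen-credit system. Two smaller points: (i) you should take $s$ as the paper does, the last instant with $V^x=0$ \emph{and} empty queue, since at the start of a backlog period the credit may still be negative under the reset rule, which would spoil the $V^x(s)\ge 0$ step; (ii) the theorem only requires the upper bounds $V^{A,\max}=\bar{L}^AI^A/c$ and $V^{B,\max}=\tfrac{I^B}{c}(L^E+L^A-\bar{L}^AI^A/S^A)$ imported from \cite{azua_complete_2014} --- exhibiting a simultaneously attainable worst-case trajectory is a tightness concern and is not needed to establish the stated service curves.
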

The proof is given in the Appendix \ref{app:1}.

\subsection{Upper Bound on the Response Time in CBFS} \label{sec:FIFO_delay}
The rate-latency service curve offered by CBFS at node $i$ for link $(i,j)$ to class $x\in \{A,B\}$ has parameters $R_{ij}^x$, $T_{ij}^x$, as calculated in Theorem \ref{thm:cbfs_delay}. Also, let $b_{ij}^{\mathrm{tot},x}= \sum_{f \in F_{ij}^{x}}b_f$. Then, the following theorem gives an upper bound on the response time for a flow $f$ at a CBFS of node $i$.

\begin{theorem}\label{thm:response_CBFS}
A tight upper bound on the response time in the CBFS of node $i$ (following the interleaved regulator) for flow $f$ of class $x\in \{A,B\}$, going from node $i$ to $j$, is:
\begin{align} \label{eq:CBFS_response_time}
S(f,i,j,x) = T_{ij}^x + \frac{b_{ij}^{\mathrm{tot},x}-\psi_{f}}{R_{ij}^x} + \frac{\psi_{f}}{c_{ij}} + T_{ij}^{\mathrm{var,max}},
\end{align}\noindent where the parameter $\psi_{f}$ depends on the flow $f$ and the type of regulator, namely, for LRQ: $\psi_{f} = L_f$ and for LB: $\psi_{f} = M_f$. 
\end{theorem}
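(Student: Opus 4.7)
The plan is to combine the CBFS service curve from Theorem~\ref{thm:cbfs_delay} with a standard FIFO network-calculus argument, and then to bound the ``bits ahead of the tagged packet'' differently depending on whether the regulator is LB or LRQ.

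First I would decompose the response time of a tagged packet $n$ of flow $f$ as $D_n - A_n = (Q_n - A_n) + l_n/c_{ij} + T^{\mathrm{var},n}_{ij}$, where $Q_n$ is the start of transmission and $l_n$ is the packet size. The last summand is bounded by $T_{ij}^{\mathrm{var,max}}$, so the real work is bounding the waiting time $Q_n - A_n$.

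For the waiting time, I would apply the rate-latency service curve $\beta(t) = R_{ij}^x(t - T_{ij}^x)^+$ from Theorem~\ref{thm:cbfs_delay} to the class-$x$ aggregate, whose arrival curve is $r_{ij}^{\mathrm{tot},x}\, t + b_{ij}^{\mathrm{tot},x}$ (recalling $b_f = L_f$ for LRQ flows). Let $s_0$ denote the start of the class-$x$ backlog period containing packet $n$. Using $D^{\mathrm{agg}}(t) \ge A^{\mathrm{agg}}(s_0^-) + R_{ij}^x(t - s_0 - T_{ij}^x)^+$ and noting that FIFO forces packet $n$ to start transmission the moment aggregate departures reach $A^{\mathrm{agg}}(A_n) - l_n$, one obtains $Q_n - A_n \le T_{ij}^x + B_{\mathrm{ahead}}/R_{ij}^x + (r_{ij}^{\mathrm{tot},x}/R_{ij}^x - 1)(A_n - s_0)$, where $B_{\mathrm{ahead}}$ bounds the class-$x$ bits that arrive in $[s_0, A_n]$ and precede packet $n$ in FIFO order. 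Since $r_{ij}^{\mathrm{tot},x} \le R_{ij}^x$ (stability), the supremum over $A_n - s_0 \ge 0$ is attained at $A_n = s_0$, giving $Q_n - A_n \le T_{ij}^x + B_{\mathrm{ahead}}/R_{ij}^x$.

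The essential step is bounding $B_{\mathrm{ahead}}$, and this is where the two regulator types part ways. For LB, each flow $g$ can realise its entire burst $b_g$ at the instant $s_0$, so placing the tagged packet last within flow $f$'s burst gives $B_{\mathrm{ahead}} \le b_{ij}^{\mathrm{tot},x} - l_n$; substituting and using $R_{ij}^x \le c_{ij}$ makes the resulting response-time bound strictly decreasing in $l_n$, so the worst case is $l_n = M_f$ and $\psi_f = M_f$. For LRQ, the arrival curve of flow $g$ is $r_g t + L_g$ with $b_g = L_g$, so the instantaneous burst of flow $g$ at $s_0$ cannot exceed one packet; in particular flow $f$ contributes only the tagged packet itself, and $B_{\mathrm{ahead}} \le \sum_{g \ne f} L_g = b_{ij}^{\mathrm{tot},x} - L_f$. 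Substituting makes the bound depend on $l_n$ only through $l_n/c_{ij}$, so the worst case is $l_n = L_f$ and $\psi_f = L_f$.

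The main obstacle is the tightness claim. I would construct, separately for each regulator type, a worst-case trace in which: the CBS credit dynamics saturate the latency $T_{ij}^x$ (relying on the credit-reset arguments already used in the proof of Theorem~\ref{thm:cbfs_delay}); the class-$x$ aggregate saturates its arrival curve with an instantaneous jump at $A_n = s_0$; the tagged packet sits last in FIFO order; and the output variable delay equals $T_{ij}^{\mathrm{var,max}}$. For LB, flow $f$'s burst ends with a size-$M_f$ tagged packet; for LRQ, each flow emits exactly one packet at $s_0$, with the flow-$f$ packet of size $L_f$. Verifying that such a trace is simultaneously admissible under the per-flow regulator, the aggregate arrival curve, and a legal CBS execution with the prescribed latency is the delicate part of the proof; once admissibility is established, the matching upper bound shows tightness.
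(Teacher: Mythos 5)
Your upper-bound argument is, at its core, the same idea as the paper's: decompose the response time as waiting time plus transmission plus variable delay, apply the class-$x$ rate--latency service curve of Theorem~\ref{thm:cbfs_delay} to the aggregate, and then shave $\psi_f$ off the aggregate burst by exploiting the per-flow regulation of the tagged flow (for LB the tagged packet's own length is removed and the negative coefficient $1/c_{ij}-1/R^x_{ij}$ pushes the worst case to $l_n=M_f$; for LRQ the spacing constraint kills the tagged flow's burst entirely and the worst case is $l_n=L_f$). The difference is in the mechanics: the paper works at packet level via Lemma~\ref{lemma:min_service_curve} and the upper pseudo-inverse $\beta^{\uparrow}$ (Lemma~\ref{lemma:waiting_time}), extracting an anchor packet $m\le n$ directly from the min-plus definition of the service curve, whereas you anchor at the start of the class-$x$ backlog period and write $D^{\mathrm{agg}}(t)\ge A^{\mathrm{agg}}(s_0^-)+R^x_{ij}(t-s_0-T^x_{ij})^+$. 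That inequality is a \emph{strict} service-curve property; Theorem~\ref{thm:cbfs_delay} as stated only asserts a min-plus service curve, for which the infimum need not be attained at the start of the backlog period. The gap is repairable (the proof of Theorem~\ref{thm:cbfs_delay} anchors at the last instant with zero credit and empty queue, which essentially gives strictness), but as written you are invoking a property you have not established; the paper's pseudo-inverse lemma is precisely the device that avoids needing it. A second, smaller point: your LRQ bound on $B_{\mathrm{ahead}}$ is argued only for the instantaneous case $A_n=s_0$, but the reduction to $A_n=s_0$ itself requires knowing that flow $f$'s burst contribution stays zero for $A_n>s_0$; that needs the telescoping spacing argument $\sum_k \mathbf{1}_{\{F(k)=f\}} l_k \le r_f(A_n-A_m)$ over the whole window, which you should state explicitly.

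The genuine gap is tightness, which is half of the claim in Theorem~\ref{thm:response_CBFS} (``a \emph{tight} upper bound''). You correctly identify the ingredients a worst-case trace must combine, but you explicitly defer the admissibility check, and that check is the substantive content: one must exhibit a concrete, jointly feasible schedule of BE, CDT and class-$x$ arrivals under which the CBS credit evolution actually realizes the latency $T^x_{ij}$ \emph{and} the tagged packet of size $\psi_f$ departs exactly at the bound of Eq.~\eqref{eq:CBFS_response_time}. The paper's proof constructs this explicitly (a BE packet blocking a CDT burst, the CDT busy period of length $(b+r\bar L/c)/(c-r)$, a class-$A$ burst $L^A+l_2$ arriving with the tagged packet second, a further BE packet and CDT arrival exploiting the negative credit, and the resulting equality $S_{\mathrm{exp}}=S(f,i,j,A)$ iff $l_2=\psi_f$), and it is this construction that reveals the nontrivial fact that for LRQ at least two flows must share the queue for the bound to be attained. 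Without carrying out such a construction, the tightness assertion remains unproven.
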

The proof is given in the Appendix \ref{app:2}.

\begin{remark}
Importantly, we should note that the bound on the response time in the CBFS given by Eq. \eqref{eq:CBFS_response_time} improves the corresponding bound obtained by using the classical network calculus approach (see \cite{le_boudec_network_2001}, Theorem 1.4.2 and Section 1.4.3). Specifically, the latter bound is not a per-flow bound and is equal to $T_{ij}^x + \frac{b_{ij}^{\mathrm{tot},x}}{R_{ij}^x} + T_{ij}^{\mathrm{var,max}}$, which is always larger than $S(f,i,j,x)$ (Eq. \eqref{eq:CBFS_response_time}) since $\frac{-\psi_{f}}{R_{ij}^x} + \frac{\psi_{f}}{c_{ij}}<0$. We reached this improved bound by combining the min-plus representation of service curve and max-plus representation of regulation \cite{le_boudec_theory_2018}.
\end{remark}

It is known from \cite{le_boudec_theory_2018} that for all flows belonging to class $x$ sharing the same CBFS queue at node $i$ and interleaved regulator at node $j$ (e.g., for link $(j,k)$),
\begin{align} \label{eq:pi_reg_del}
C(i,j, k, x) = \sup_{f'\in F_{ijk}^x}S(f',i,j,x)+ T_{ij}^{\mathrm{proc}, \max}.
\end{align}
Therefore, the following Corollary is a direct result.
\begin{corollary}\label{col:combined_response_time}
Assume flows of class $x\in \{A,B\}$, going from node $i$ to $j$, and enqueued in the interleaved regulator at node $j$ for link $(j,k)$. An upper bound, for each flow, of the combination of the response time in  CBFS of node $i$ (following the interleaved regulator of $i$) and the interleaved regulator at node $j$ for link $(j,k)$ is given by:
	\begin{align}\label{eq:col_combined_del_bound}
		C(i,j,k,x) =& T^x_{ij} + \frac{b_{ij}^{\mathrm{tot},x}}{R^x_{ij}}+T^{\mathrm{var,max}}_{ij} \nonumber \\
&+\sup_{f'\in F^x_{ijk}}\Bigg(\frac{\psi_{f'}}{c_{ij}}-\frac{\psi_{f'}}{R^x_{ij}}\Bigg)+ T_{ij}^{\mathrm{proc}, \max},
	\end{align}\noindent where for LRQ: $\psi_{f} = L_f$ and for LB: $\psi_{f} = M_f$.
\end{corollary}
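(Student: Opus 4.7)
The plan is to derive the Corollary as a direct algebraic substitution, combining the per-flow CBFS response-time bound from Theorem \ref{thm:response_CBFS} with the combined-delay identity in Equation \eqref{eq:pi_reg_del}. No new network-calculus machinery is required; the work is essentially to show that the supremum over flows $f' \in F^x_{ijk}$ isolates the only $f'$-dependent term, namely $\psi_{f'}$.

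First I would start from Equation \eqref{eq:pi_reg_del}, which states that $C(i,j,k,x) = \sup_{f'\in F_{ijk}^x} S(f',i,j,x) + T_{ij}^{\mathrm{proc},\max}$. I would then substitute the explicit bound for $S(f',i,j,x)$ from Theorem \ref{thm:response_CBFS}, namely
\begin{align*}
S(f',i,j,x) = T_{ij}^x + \frac{b_{ij}^{\mathrm{tot},x}-\psi_{f'}}{R_{ij}^x} + \frac{\psi_{f'}}{c_{ij}} + T_{ij}^{\mathrm{var,max}}.
\end{align*}
Among the terms on the right-hand side, only $\psi_{f'}$ depends on the particular flow $f'$; the quantities $T_{ij}^x$, $R_{ij}^x$, $b_{ij}^{\mathrm{tot},x}$, $c_{ij}$ and $T_{ij}^{\mathrm{var,max}}$ are all properties of the CBFS at node $i$ and link $(i,j)$.

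Next I would use the standard fact that a supremum commutes with the addition of constants: $\sup_{f'}[\alpha + g(f')] = \alpha + \sup_{f'} g(f')$. Pulling out the $f'$-independent constants
\begin{align*}
\alpha := T_{ij}^x + \frac{b_{ij}^{\mathrm{tot},x}}{R_{ij}^x} + T_{ij}^{\mathrm{var,max}}
\end{align*}
leaves
\begin{align*}
\sup_{f' \in F^x_{ijk}} \left( \frac{\psi_{f'}}{c_{ij}} - \frac{\psi_{f'}}{R_{ij}^x} \right),
\end{align*}
exactly the flow-dependent residual appearing in the Corollary. Adding $T_{ij}^{\mathrm{proc},\max}$ then yields Equation \eqref{eq:col_combined_del_bound}. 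The clause on the definition of $\psi_{f'}$ (equal to $L_f$ for LRQ, $M_f$ for LB) is inherited verbatim from Theorem \ref{thm:response_CBFS}.

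There is no substantive obstacle: the proof is a one-line rearrangement. The only thing worth flagging is the implicit assumption, already made in \eqref{eq:pi_reg_del}, that all flows in $F^x_{ijk}$ share the same CBFS queue at node $i$ (so that the constants $T_{ij}^x, R_{ij}^x, b_{ij}^{\mathrm{tot},x}$ indeed do not vary with $f'$), and the same interleaved regulator at node $j$; both conditions are guaranteed by the definition of $F^x_{ijk}$ in Section \ref{subsec:notation}. Thus the Corollary follows immediately.
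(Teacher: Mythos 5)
Your proposal is correct and matches the paper's own (implicit) argument exactly: the paper presents the corollary as ``a direct result'' of Eq.~\eqref{eq:pi_reg_del} combined with Theorem~\ref{thm:response_CBFS}, which is precisely the substitution and extraction of $f'$-independent constants that you carry out. Your remark about all flows in $F^x_{ijk}$ sharing the same CBFS queue and regulator is a worthwhile clarification, but it does not change the substance.
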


\subsection{Bound on the Response Time in the Interleaved Regulator}
\label{sec:intreg}
The following theorem proves an upper bound on the response time in the interleaved regulator, $H(f,i,j,x)$.

\begin{theorem}\label{thm:response_time}
	An upper bound on the response time for flow $f$ of class $x \in \{A,B\}$ in the interleaved regulator at node $j$ for link $(j,k)$ that follows the CBFS of node $i$ is:
	\begin{align}\label{eq:interleaved_del_bound}
	H(f,i,j,k,x) =  C(i,j,k,x) - \frac{M_f}{c_{ij}} - T_{ij}^{\mathrm{var,min}}-T_{ij}^{\text{proc, min}} .
	\end{align}
\end{theorem}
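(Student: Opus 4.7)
The plan is to decompose $E_n - D'_n$ as the difference $(E_n - A_n) - (D'_n - A_n)$, upper-bound the first term using the combined response-time bound from Corollary~\ref{col:combined_response_time}, and lower-bound the second term by summing the intrinsic minimum delays of its three physical constituents. This structure mirrors the one already used to derive $C(i,j,k,x)$ from $S(f,i,j,x)$ via the shaping-for-free property of \cite{le_boudec_theory_2018}; we do not need to reason directly about the regulator's dynamics, and can instead extract $E_n - D'_n$ arithmetically from the already-established combined bound.

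Concretely, for a packet $n$ of flow $f$, the definitions of the five time instants in Section~\ref{subsec:notation} give
\begin{align*}
D'_n - A_n = (Q_n - A_n) + (D_n - Q_n) + (D'_n - D_n).
\end{align*}
Each summand admits a straightforward lower bound: $(Q_n - A_n) \geq 0$ since the queueing delay inside CBFS is nonnegative; $(D_n - Q_n) = l_n/c_{ij} + T^{\text{var},n}_{ij} \geq M_f/c_{ij} + T^{\text{var,min}}_{ij}$, since $l_n \geq M_f$ for every packet of flow $f$ and the variable component of the output delay satisfies $T^{\text{var},n}_{ij}\geq T^{\text{var,min}}_{ij}$; and $(D'_n - D_n) \geq T^{\text{proc,min}}_{ij}$ by the modelling assumption. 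Summing these three inequalities yields
\begin{align*}
D'_n - A_n \geq \frac{M_f}{c_{ij}} + T^{\text{var,min}}_{ij} + T^{\text{proc,min}}_{ij}.
\end{align*}
Combining with $E_n - A_n \leq C(i,j,k,x)$ from Corollary~\ref{col:combined_response_time}, and taking the supremum over packets of flow $f$, produces exactly \eqref{eq:interleaved_del_bound}.

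I expect no substantial technical obstacle; the proof is essentially an arithmetic rearrangement applied per packet. Two subtleties are worth flagging. First, one must use the \emph{per-flow} minimum packet size $M_f$ (rather than a per-class minimum) when lower-bounding the transmission time, so that the resulting bound is tailored to flow $f$. Second, the whole argument critically relies on invoking the joint bound $C(i,j,k,x)$ rather than trying to bound the interleaved regulator in isolation: bounding the regulator on its own would discard the shaping-for-free saving of \cite{le_boudec_theory_2018} and produce a strictly looser result.
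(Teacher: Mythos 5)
Your proof is correct and follows essentially the same route as the paper's: the same decomposition $E_n - D'_n = (E_n - A_n) - (D'_n - A_n)$, the same use of $C(i,j,k,x)$ to bound the first term, and the same lower bound $D'_n - A_n \geq M_f/c_{ij} + T_{ij}^{\mathrm{var,min}} + T_{ij}^{\mathrm{proc,min}}$ for the second (which you justify in slightly more detail than the paper does). No issues.
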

The proof is given in the Appendix \ref{app:3}.

%

\begin{remark}
It is shown numerically in Section \ref{sec:example}, that $H$ is tight for the flow $f$ that acheives the maximum response time at the CBFS, i.e., for which $S(f,i,j,x)=C(i,j,k,x)$.
\end{remark}
\vspace{-0.05in}
\subsection{Upper Bound on the End-to-End Delay}
\label{subsec:e2e}
Assume an AVB flow $f$ routed through the nodes $(i_1,...,i_k)$, where the source is $i_1$ and destination is $i_k$. It is assumed that the arrival curves of the generated flows in source conform to the flows' regulation policies, and thus the flows do not experience delay at the interleaved regulators of the source nodes. An upper bound on the end-to-end delay for flow $f$ of class $x$, namely, $D_f^x$, is,
\vspace{-0.08in}
\begin{align}\label{eq:end_to_end_latency_bound}
D_f^x &= \sum_{j=1}^{k-2}C(i_j,i_{j+1},i_{j+2},x)+ S(f,i_{k-1},i_k,x).
\end{align} \noindent 

$D_f^x$ can be easily computed by using Eqs. (\ref{eq:CBFS_response_time}), \eqref{eq:col_combined_del_bound}. In Section \ref{subsec:tightness_e2e}, we show numerically that this bound is tight.


\vspace{-0.06in}
\section{Backlog Bounds}
\label{sec:buffer}
In this section, we determine an upper bound on the backlog for each AVB class of interleaved regulator and CBFS.

\vspace{-0.06in}
\subsection{Backlog Bound on Interleaved Regulator}
In network calculus, computing upper bounds on the backlog requires information on arrival and service curves \cite{le_boudec_network_2001}.
\vspace{-0.1in}
\subsubsection{Service Curve Offered by Interleaved Regulator}
It is known that a service curve offered by a FIFO system which guarantees a maximum delay $D$, is equal to the ``impulse" function $\delta_D(t)$\footnote{defined as $\delta_D(t)=0$, $0\leq t\leq D$, else  $\delta_D(t)=+\infty$.} \cite{le_boudec_theory_2018}. The interleaved regulator is a FIFO system, for which a delay upper bound is computed in Section \ref{sec:intreg}. Therefore, a service curve offered by the interleaved regulator for class $x$, at node $j$ for link $(j,k)$, that follows a CBFS of node $i$ is $\delta_{D(i,j,k,x)}(t)$, where $D(i,j,k,x)=\sup_{f' \in F_{ijk}^x}{H(f',i,j,l,x)}$ is computed using Theorem \ref{thm:response_time}.
\vspace{-0.01in}
\subsubsection{Arrival Curve of Interleaved Regulator Input} \label{sec:acinterl}

The output flows of the upstream CBFS (node $i$) may not share the same interleaved regulator. Let us consider the interleaved regulator of node $j$ for link $(j,k)$ that follows the CBFS of node $i$. Suppose that $r_{s}$ and $b_{s}$ are the sum of rates and bursts of the flows $f' \in F_{ijk}^x$ for $x \in \{A,B\}$. In addition, $r_{w}$ and $b_{w}$ are the sum of rates and bursts of the flows that do not use the same interleaved regulator in downstream node with the previous flows. The CBFS offers a rate-latency service curve with parameters $(R_{ij}^x,T_{ij}^x)$ to the class $x \in \{A,B\}$ (Theorem \ref{thm:cbfs_delay}). Then, according to \cite{le_boudec_network_2001}, the output arrival curve of the former flows is an LB one, $r_s t +b_{out}$ with $b_{out}=b_{s} + r_{s}(T_{ij}^x+\frac{b_{w}}{R_{ij}^x})$.

On the other hand, the upstream line has constant rate, $c_{ij}$. Therefore, it also enforces an arrival curve to the input of the interleaved regulator equal to $c_{ij}t + \sup_{f'\in F_{ijk}^x}{L_{f'}}$.

As the CBFS follows the interleaved regulator, the input arrival curve of the interleaved regulator is,
\begin{equation}
\alpha(t) = \min\Big(c_{ij}t + \underset{f'\in F_{ijk}^x}{\text{sup}}\{L_{f'}\},r_{s}t+b_{s} + r_{s}(T_{ij}^x+\frac{b_{w}}{R_{ij}^x})\Big).
\end{equation}
\vspace{-0.1in}
\subsubsection{Backlog Bound on Interleaved Regulator}
The backlog bound is calculated as 
$\sup_{s\geq 0}{\alpha(s) - \beta(s)}$
\cite{le_boudec_network_2001}, where $\alpha(t)$ is the arrival curve and $\beta(t)$, the service curve. By replacing the arrival and service curves obtained in the two previous subsections, we obtain the backlog bound of the interleaved regulator for class $x \in \{A,B\}$ at node $j$ for link $(j,k)$ that follows the CBFS of node $i$, denoted as $B^{\text{IR}, x}_{ijk}$ and given:
\begin{equation}\label{eq:IR_buffer_bound}
\begin{split}
B^{\mathrm{IR}, x}_{ijk}= \min&\Big(c_{ij}D(i,j,k, x) + \sup_{f'\in F_{ijk}^x}\{L_{f'}\},\\
&r_{s}D(i,j,k,x)+b_{s} + r_{s}(T_{ij}^x+\frac{b_{w}}{R_{ij}^x})\Big),
\end{split}
\end{equation}\noindent where 
$T_{ij}^x$, $R_{ij}^x$ are computed in Theorem \ref{thm:cbfs_delay}.

\vspace{-0.08in}
\subsection{Backlog Bound on Class-Based FIFO System}
Consider all flows $f \in F_{ij}^{x}$. The input of the CBFS for class $x$ has an arrival curve equal to the sum of all $\alpha_f$. 
Using Theorem \ref{thm:cbfs_delay} and following a process similar to the one followed for the interleaved regulator, the backlog bound of the CBFS at node $i$ for link $(i,j)$ and class $x$, denoted as $B_{i,j}^{\text{CBFS},x}$ is

\begin{equation}\label{eq:CBFS_buffer_bound}
B_{i,j}^{\mathrm{CBFS},x} = \sum_{f' \in F^x_{ij}}b_{f'}+\sum_{f' \in F^x_{ij}}r_{f'} T^x_{ij}.
\end{equation}

\section{Case Study}
\label{sec:example}
\begin{figure}
	\centering
	\includegraphics[width=0.4 \linewidth]{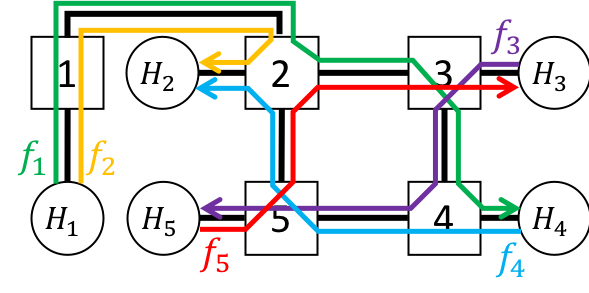}
	\caption{Practical TSN network used for the case study.}
	\label{fig:e2e_delay_net}
\end{figure}

In this section, we apply the results obtained in the previous sections to practical TSN networks (Fig. \ref{fig:e2e_delay_net}, \ref{fig:e2e_tight_delay_net}). We highlight the tightness of the delay bounds obtained and the sub-additivity property of the end-to-end delay bound. 

\subsection{TSN Network Setup and Flows}
\label{subsec:network_setup}
We use the network shown in Fig. \ref{fig:e2e_delay_net}. It consists of five switches labeled $1$-$5$, and five hosts (as sources and destinations of flows), namely $H_1$-$H_5$, with five class A flows $f_1$-$f_5$. Flow $f_1$ is LRQ regulated with rate $r_{f_1}=20 ~Mbps$ and has maximum packet length $L_{f_1}=1 ~Kb$. Flows $f_2$-$f_5$ are LRQ regulated with rate $20 ~Mbps$ and maximum packet length $2 ~Kb$. It is assumed that on each output port there is a CDT flow with an LB arrival curve $(20 ~Mbps, 4 ~Kb)$, and a BE flow with maximum packet length of $2 ~Kb$.
As is shown in the Fig. \ref{fig:e2e_delay_net}, due to circular dependency among the flows, use of interleaved regulators leads to bounded end-to-end latency for the flows. Otherwise, burstiness cascade of the flows would cause unbounded end-to-end latency.

For ease of presentation, we assume that the CBFS has only three classes: CDT, class A, and one BE. Moreover, $T_{ij}^{\text{var,max}}$ and $T_{ij}^{\text{proc,\text{max}}}$ are zero in all switches $i$, links $(i,j)$ and all packets of a same flow have the same size. The line rate is equal to $100 ~Mbps$. The parameters of CBS are $I^A_{ij}= 50 ~Mbps$ and $S^A_{ij}= -50 ~Mbps$, $\forall ~ i,j, H_1-H_5$. We are interested in studying the worst case response time of flow $f_1$ in CBFS of host $H_1$ and its corresponding interleaved regulator in switch $1$. Also, we compute the theoretical end-to-end delay bound of this flow and show its sub-additivity property.

\begin{figure*}
	  	\centering
	  	\begin{subfigure}{.25\textwidth}
	  		\centering
	  		\includegraphics[width=1\linewidth]{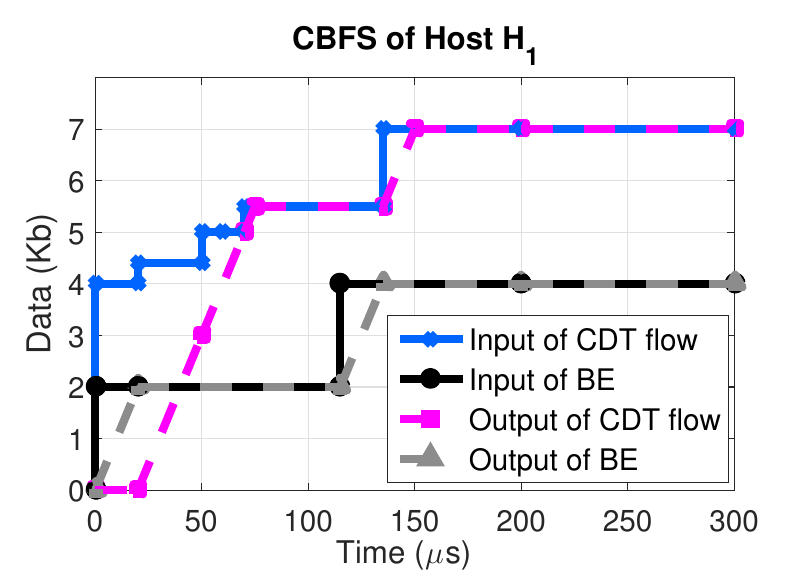}
	  		\caption{CDT and BE flows}
	  		\label{fig:arrival_departure_1}
	  	\end{subfigure}%
	  	\begin{subfigure}{.25\textwidth}
	  		\centering
	  		\includegraphics[width=1\linewidth]{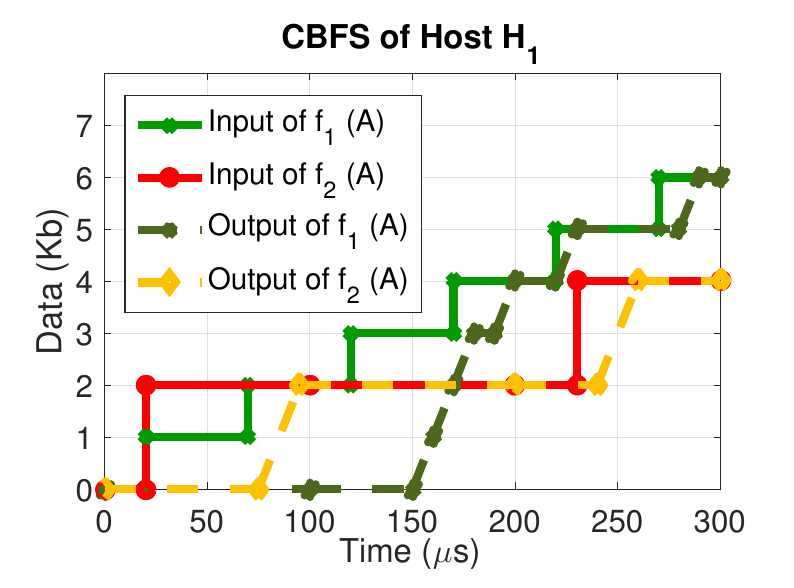}
	  		\caption{$f_1$ and $f_2$ flows (class A)}
	  		\label{fig:arrival_departure_2}
	  	\end{subfigure}
  	\begin{subfigure}{.25\textwidth}
  		\centering
  		\includegraphics[width=1\linewidth]{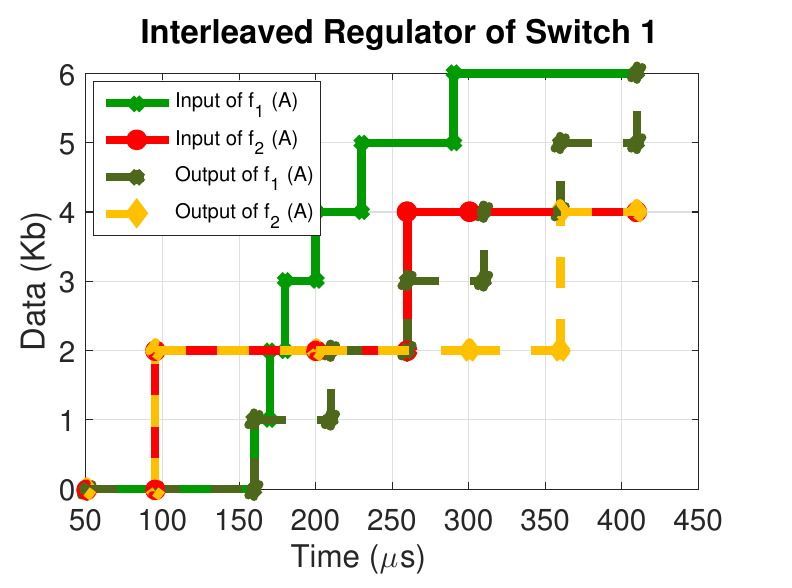}
  		\caption{$f_1$ and $f_2$ flows (class A)}
  		\label{fig:arrival_departure_3}
  	\end{subfigure}
	  	\caption{Cumulative data input and output curves for the CBFS of $H_1$ and interleaved regulator of switch $1$, with respect to Fig. (\ref{fig:e2e_delay_net}). Figures (a) and (b) show the data arrival and departures from the CBFS of $H_1$, and the figure (c) shows the arrival and departure of data from the interleaved regulator for flows $f_1$ and $f_2$ in switch $1$.}
	  	\label{fig:arrival_departure}
	  \end{figure*}

\subsection{Computation of Theoretical Bounds}
\label{subsec:theo_comp}
We compute the obtained upper bounds for the response time in CBFS and interleaved regulator for flow $f_1$, and the backlog bounds for the host $H_1$ and switch $1$. According to Theorem \ref{thm:response_CBFS}, the bound on the CBFS response time for flow $f_1$ in the host $H_1$ is $S(f_1,H_1,1,A) = 140~\mu s$. Also, from Theorem \ref{thm:response_time}, the bound on the response time in interleaved regulator for flow $f_1$, enqueued in the output port for link $(1,2)$ on switch $1$ is $H(f_1,H_1,1,2,A) =130~\mu s$. Also, the backlog bound for the same interleaved regulator (Eq. \eqref{eq:IR_buffer_bound}) is $11.4 ~~Kb$. The backlog bound for CBFS of class A in host $H_1$ is $6.2 ~~Kb$ (Eq. \eqref{eq:CBFS_buffer_bound}). To compute the end-to-end delay, we use Eq. (\ref{eq:end_to_end_latency_bound}). 
Using Eq. \eqref{eq:col_combined_del_bound}, we find that $C(H_1,1,2,A) = C(1,2,3,A) = C(2,3,4,A) = C(3,4,H_4,A) = 140~\mu s$, and $S(f_1,4,H_4,A) = 140~\mu s$. Thus, for flow $f_1$ of class $A$ we have the upper bound on delay $D_{f_1}^A = 700~\mu s$.

\subsection{Numerical Example of Tightness}
\label{subsec:numerical}
Next, we show how these bounds are tight by presenting a particular series of packet arrivals as shown in Fig. \ref{fig:arrival_departure}. This figure shows the input and output curves related to $f_1$, $f_2$, CDT and BE flows in host $H_1$ and switch $1$. A step in the input curve indicates the time of reception of the entire packet. According to Fig. \ref{fig:arrival_departure_1}, at time $0~\mu s$, a packet of BE arrives and starts being transmitted. At time $0^+~\mu s$, a burst of CDT traffic arrives and then for time $t \geq 0^+$, CDT traffic continues to arrive with rate $20 ~Mbps$ up to the time $75~\mu s$. The transmission of CDT traffic at time $0^+$ is blocked by the transmission of the BE packet as all switches are non-preemptive. At time $20~\mu s$, CDT traffic has accumulated a backlog and starts its transmission.

From Fig. \ref{fig:arrival_departure_2}, we see that time $20~\mu s$ is the start of the backlog period of class A since a packet of flow $f_2$ and a packet of $f_1$ arrive, with first of the two being the former. The first packet of flow $f_2$ reaches at time $95~\mu s$ the interleaved regulator in switch $1$ for link ($1,2$) that implies a response time of $75~\mu s$ for flow $f_2$ in the CBFS of host $H_1$. The first packet of flow $f_1$ finishes its transmission at time $160~\mu s$ from CBFS in $H_1$, due to its earlier blockage by the CDT and $f_2$ traffic. This implies a response time of $140~\mu s$ for flow $f_1$ in CBFS of $H_1$, i.e., equal to the bound in Section \ref{subsec:theo_comp}.

 From Fig. \ref{fig:arrival_departure_3}, we notice that the worst-case response time in the interleaved regulator for flow $f_1$ at switch $1$ is for the packet that arrives at time $230~\mu s$. This packet is declared eligible by the interleaved regulator at time $360~\mu s$. This implies the response time of $130~\mu s$ in the interleaved regulator that is the upper bound for flow $f_1$ as computed in Section \ref{subsec:theo_comp}. The maximum response time seen in Fig. \ref{fig:arrival_departure_3} for flow $f_2$ is for its packet that arrives at time $260~\mu s$ at the interleaved regulator at switch $1$ and is equal to $100 ~\mu s$.

Note that this packet of  flow $f_2$ could have been declared eligible by the interleaved regulator already at $260~\mu s$ but is blocked by preceding packets of flow $f_1$ that were not yet eligible at that time. Based on Fig. \ref{fig:arrival_departure_2} and \ref{fig:arrival_departure_3}, we observe that packets of flow $f_1$ experience a maximum delay of $140~\mu s$ from the time being enqueued in the CBFS of $H_1$ to the time being declared eligible by the interleaved regulator at switch $1$ (equal to $C(H_1,1,2,A)$, computed in Section \ref{subsec:theo_comp}).

The maximum observed backlog for class A used in the CBFS at the output port of $H_1$ is equal to $4 ~~Kb$ during times $70~\mu s$ to $75\mu s$, which is $65\%$ of the computed bound. Furthermore, the maximum backlog observed in the interleaved regulator at output port of switch $1$ is equal to $5 ~~Kb$ during times $230~\mu s$ to $260~\mu s$, which is $43\%$ of the computed bound.

\vspace{-0.03in}
\subsection{Sub-additivity of End-to-End Delay Bound}
\label{subsec:sub-additivity}
\vspace{-0.03in}
In TSN, the common way of computing the end-to-end delay bound is by adding the delay bounds of each switch in the path of a flow. However, Eq. (\ref{eq:end_to_end_latency_bound}) provides a much better upper bound. To show this, we first compute the delay upper bound for switch $i$ following switch $j$ and followed by switch $k$ in the path of a flow $f$ of class $x$, is given by,

\begin{align}\label{eq:end_2_end_switch_delay}
d_{j,i,k}^{x,f} =& H(f,{j},{i},{k},x) + S(f,{i},{k},x) + T^{\mathrm{proc,max}}_i,
\end{align}
where for $i$ being a source, $H(f,{j},{i},{k},A)$ is equal to zero. Considering $T^{\text{proc,max}}_i$ equal to zero in this case, an end-to-end delay bound for flow $f_1$ over the path $(H_1,1), (1,2), (2,3), (3,4), (4, H_4)$ can be computed as $(0+140)+ 4\times(130+140) =1220~\mu s$.
%

From Section \ref{subsec:numerical}, we know that an upper bound on the end-to-end delay for flow $f_1$ is $700~\mu s$ which is $57\%$ of $1220 \mu s$, obtained using Eq. \ref{eq:end_2_end_switch_delay}. A similar phenomenon occurs with per-flow networks, called ``pay bursts only once".

\subsection{Tightness of End-to-End Delay Bound}
\label{subsec:tightness_e2e}
\begin{figure}
	\centering
	\includegraphics[width=0.5 \linewidth]{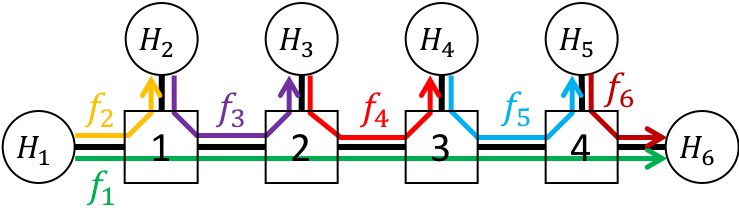}
	\caption{TSN network for tightness of end-to-end delay bound.}
	\label{fig:e2e_tight_delay_net}
\end{figure}
Consider the network shown in Fig. \ref{fig:e2e_tight_delay_net}, having four switches labeled $1$ - $4$, and six hosts, namely $H_1$ - $H_6$, with six class A flows $f_1$ - $f_6$. The assumptions on $f_1-f_5$, CDT and BE traffic are as in Section \ref{subsec:network_setup} and $f_6$ is similar to $f_2-f_5$.

To show the tightness of end-to-end delay bound of Eq. \eqref{eq:end_to_end_latency_bound},  we claim that each pair of $f_1$, $f_3$ at switch $1$, $f_1$, $f_4$ at switch $2$, $f_1$, $f_5$ at switch $3$, and $f_1$, $f_6$ at switch $4$ experience the same input/output curves as the pair of flows $f_1$, $f_2$ in Fig. \ref{fig:arrival_departure} but appropriately shifted in time, so that they take place sequentially.
 Thus, flow $f_1$ has a delay of $140~\mu s$ from the time being enqueued in the CBFS of $H_1$ to the time declared eligible from the interleaved regulator at $1$. The same delay is experienced by flow $f_1$ at the rest pairs of switches in its path. Similar to Section \ref{subsec:numerical}, the response time of $f_1$ at CBFS of switch $4$ is equal to $140~\mu s$. Therefore, the end-to-end delay for $f_1$ is equal to $4\times140+140= 700~\mu s$, which is equal to the bound computed from Eq. \eqref{eq:end_to_end_latency_bound}. 
\section{Conclusion}\label{sec:conclusion}
We have provided a set of formulas for computing bounds on end-to-end delay and backlog for class A and class B traffic in a TSN network that uses CBS and ATS. The bounds are rigorously proven, while we provide a representative case study that highlights the tightness of the delay bounds provided and shows the sub-additivity of the end-to-end delay bound. Future work will address other mechanisms in TSN.

\bibliographystyle{IEEEtran}
\vspace{-0.05in}
\bibliography{ref}

  \appendices

\section{Proof of Theorem \ref{thm:cbfs_delay}}\label{app:1}
\begin{proof}
	In this proof, for the ease of presentation, the indices $i$ and $j$ are neglected from the parameters. For a given class $x \in \{A,B\}$, $V^x(u)$ stands for the amount of CBS credit at time $u$, which is less than maximum credit, namely $V^{x,\text{max}}$. Also, let $N^x(u)$, $O^{x}(u)$ be the cumulative input, output traffic, respectively, for class $x$ at time $u$ and $O^{H}(u)$ the cumulative output CDT traffic. Given a time $t$, we define the time $s$, as follows,
	\begin{equation}
	s = \sup\Big\{u\leq t ~| ~V^x(u)=0 \text{ and } O^{x}(u)=N^x(u)\Big\}.
	\end{equation}
	In the rest of the proof, we use $\Delta t = t-s$ as the time interval over which the system is analyzed. For a given class $x$, we also define the time intervals $\Delta t^{x-}$, $\Delta t^{x+}$, and $\Delta t^{x0}$, as the aggregated time periods within $\Delta t$, over which the credit is decreasing (due to packet transmission of this class), the credit is increasing (due to transmission of BE traffic or AVB traffic except class $x$ in backlog period of this class or credit recovery), and the credit is frozen (due to transmission of CDT flows), respectively. Then,
	\begin{equation}
	V^x(t) - V^x(s) = V^x(t) = \Delta t^{x+} I^x + \Delta t^{x-} S^x.
	\end{equation}
	 As for a class $x\in \{A,B\}$, $\Delta t = \Delta t^{x+} + \Delta t^{x-} + \Delta t^{x0}$, then,
	\begin{equation}\label{eq:proof_service_delta-}
	\Delta t^{x-} = \frac{I^x \Delta t -V^x(t) - I^x \Delta t^{x0}}{I^x - S^x}.
	\end{equation}
	The number of bits served during $\Delta t$ for class $x \in \{A,B\}$ is $c\Delta t^{x-}$. By utilizing Eq. \eqref{eq:proof_service_delta-},
	\begin{equation}\label{eq:proof_service_cdt_out}
	O^{x}(t) - O^{x}(s) = \frac{c}{I^x-S^x}(I^x \Delta t -V^x(t) - I^x \Delta t^{x0}).
	\end{equation}
	$\Delta t^{x0}$ is the aggregated transmission time periods of CDT flows during $\Delta t$; therefore,
	\begin{equation}\label{eq:proof_service_delta0-1}
	\Delta t^{x0} = \frac{O^{H}(t) - O^{H}(s) }{c}.
	\end{equation}
The service curve offered to an aggregate of CDT flows in CBFS at time $t > s$ is $\beta^H(t)=c[t-\bar{L}/c]^+$, where  $\bar{L}=\max(L^A_{ij},L^B_{ij}, L^E_{ij})$. As the CDT queue has an LB arrival curve (denoted as $\alpha^H=rt +b$), the following equation holds for the output arrival curve of the CDT queue:
	\begin{equation}\label{eq:proof_service_cdt_out_arrival}
	O^{H}(t) - O^{H}(s) \leq (\alpha^H \oslash \beta^H)(\Delta t) = b+r(\Delta t+\frac{\bar{L}}{c}).
	\end{equation}
	Therefore, from Eqs. \eqref{eq:proof_service_delta0-1} and \eqref{eq:proof_service_cdt_out_arrival},
	\begin{equation}\label{eq:proof_service_delta0-2}
	\Delta t^{x0} \leq \frac{b+r(\Delta t+\frac{\bar{L}}{c})}{c}.
	\end{equation}
	From Eqs. \eqref{eq:proof_service_cdt_out} and \eqref{eq:proof_service_delta0-2}, and knowing that $O^{x}(s)=N^x (s)$ (from the assumption for $s$):
	\begin{equation} \label{eq:outputx}
	O^{x}(t) - N^x(s) \geq \frac{I^x(c-r)}{I^x-S^x}\Bigg(\Delta t - \frac{V^x(t)c}{I^x(c-r)}-\frac{b+\frac{r\bar{L} }{c}}{c-r}\Bigg).
	\end{equation}\noindent As $V^x(t) \leq V^{x,\text{max}}$, and 
	using Eq. \eqref{eq:outputx}, we define the following rate-latency service curve for class $x \in \{A,B\}$:
	\begin{equation}
	\beta^x(t)= \frac{I^x(c-r)}{I^x-S^x}\Bigg[ t - \frac{V^{x,\mathrm{max}}c}{I^x(c-r)}-\frac{b+\frac{r\bar{L}}{c}}{c-r}\Bigg]^+.
	\end{equation}
	Finally, we replace in the last equation the credit upper bound given in \cite{azua_complete_2014} for each class $x  \in\{A,B\}$. For completeness:
	\begin{align}
	V^{A,\mathrm{max}} &= \bar{L}^A \frac{I^A}{c},\\
	V^{B,\mathrm{max}} &= \frac{I^B}{c}\Bigg(L^E + L^A - \bar{L}^A \frac{I^A}{S^A}\Bigg),
	\end{align} \noindent and we obtain the equations in the statement of Theorem \ref{thm:cbfs_delay}.
\end{proof}

\section{Proof of Theorem \ref{thm:response_CBFS}}\label{app:2}
\begin{proof}
First we compute the bound and then we show that the bound is tight. For the computation of $S(f,i,j,x), \forall f,i,j,x$, we use the waiting time of the $n^{th}$ packet assuming it belongs to the flow $f$ at the CBFS, $W(n,f,i,j,x)$, computed in Lemma~\ref{lemma:waiting_time}:

\begin{align} \label{eq:CBFS_response_2a}
D_n-A_n&=(Q_n-A_n) +(D_n-Q_n)\\
&\leq W(n,f,i,j,x) + D_n-Q_n\\
&\leq W(n,f,i,j,x) + \frac{l_n}{c_{ij}}+T^{\mathrm{var},\max}_{ij}.
\end{align}
If flow $f$ is LB-regulated, by Lemma~\ref{lemma:waiting_time} we have
\begin{align} \label{eq:CBFS_response_2a}
W(n,f,i,j,x) =T_{ij}^x + \frac{b_{ij}^{\mathrm{tot},x}-l_n}{R_{ij}^x},
\end{align} 

thus in this case
\begin{align} \label{eq:CBFS_response_2a}
D_n-A_n
&\leq T_{ij}^x + \frac{b_{ij}^{\mathrm{tot},x}}{R_{ij}^x}+l_n\left(\frac{1}{c_{ij}}-\frac{1}{R^x_{ij}}\right)+T^{\mathrm{var},\max}_{ij}
\\
&\leq T_{ij}^x + \frac{b_{ij}^{\mathrm{tot},x}}{R_{ij}^x}+M_f\left(\frac{1}{c_{ij}}-\frac{1}{R^x_{ij}}\right)+T^{\mathrm{var},\max}_{ij},
\end{align}
because $R^x_{ij}<c_{ij}$. This shows Eq. \eqref{eq:CBFS_response_time} in this case.

If flow $f$ is LRQ-regulated, by Lemma~\ref{lemma:waiting_time} we have
\begin{align} \label{eq:CBFS_response_2a}
W(n,f,i,j,x) =T_{ij}^x + \frac{b_{ij}^{\mathrm{tot},x}-L_f}{R_{ij}^x},
\end{align}

thus in this case
\begin{align} \label{eq:CBFS_response_2a}
D_n-A_n
&\leq T_{ij}^x + \frac{b_{ij}^{\mathrm{tot},x}-L_f}{R_{ij}^x}+\frac{l_n}{c_{ij}}+T^{\mathrm{var},\max}_{ij}
\\
&\leq T_{ij}^x + \frac{b_{ij}^{\mathrm{tot},x}-L_f}{R_{ij}^x}+\frac{L_f}{c_{ij}}+T^{\mathrm{var},\max}_{ij}.
\end{align}
This shows Eq. \eqref{eq:CBFS_response_time} in this case.

Now we show that the bound is tight via analyzing an example illustrated in Fig. \ref{fig:waiting_tightness}. For the ease of presentation, we drop the indices $i, j$ from the terms, $R_{ij}^x$, $T_{ij}^x$, $b_{ij}^{\text{tot},x}$, $L_{ij}^x$, $\bar{L}_{ij}^x$, $I_{ij}^x$, $S_{ij}^x$, $b_{ij}$, $r_{ij}$, $c_{ij}$.
We assume $L^E=\bar{L} =\bar{L}^A $.
At time $s_0$, a packet of BE with size $L^E$ arrives and starts being transmitted. At time $s_0^+$, a burst $b$ of CDT traffic arrives and then for time $t>s_0^+$, CDT traffic continues to arrive with rate $r$ up to the time $s_1$. The transmission of CDT traffic at time $s_0^+$ is blocked by the transmission of the BE packet, since it is assumed that lower priority queues are non-preemptive. At time $s=s_0+\frac{L^E}{c}$, CDT traffic has accumulated a backlog of $b+r\frac{L^E}{c}$ and starts its transmission.

The transmission time of this amount of CDT traffic is $\tau = \frac{b+r\frac{L^E}{c}}{c}$. During $\tau$, according to arrival curve of CDT, $r\tau$ bits are added to CDT queue. Similarly, during $r\tau$,
$r^2 \tau$ bits are added to the CDT queue, etc., and this continues until the time that the CDT queue empties, $s_1$. At $s_1$, class A can start transmitting its packets. The time interval between $s$ and $s_1$ can be calculated as follows:
\begin{align}
T_{s\rightarrow s_1} &= \tau + \frac{r\tau}{c} + \frac{r(\frac{r\tau}{c})}{c}+...=\tau(1+\frac{r}{c}+(\frac{r}{c})^2+...)\\
& = \frac{c}{c-r}\tau=\frac{c}{c-r} (\frac{b+r\frac{L^E}{c}}{c})=\frac{b+r\frac{L^E}{c}}{c-r}.
\end{align}
 At time $s^+$, there is also arrival of a burst of class $A$ traffic with $b^{\text{tot},A}=L^A+l_2$ and assuming that $l_2$ is a packet of flow $f$ that comes second in the burst.
 We are interested in the response time of the packet of this flow, $f$, in the CBFS of switch $i$. Finally, at time $s_3^-$ a packet of BE arrives and starts being transmitted, as class $A$ has still negative credit and the backlog of CDT is zero. Finally, at time $s_4^-$, CDT traffic arrives according to its arrival curve (i.e., $rT_{s_1\rightarrow s_4}$ if assuming that $rT_{s_1\rightarrow s_4}<b$, with $T_{s_1\rightarrow s_4}=\frac{1}{c}(L^A(\frac{I^A-S^A}{I^A})+\bar{L}^A)$ as shown in Fig. \ref{fig:waiting_tightness}) and starts being transmitted. Similarly with $T_{s\rightarrow s_1}$, we can compute that $T_{s_4\rightarrow s_5} = \frac{rT_{s_1\rightarrow s_4}}{c-r}$.

Due to the priority of CDT over class $A$ and of the latter over BE along with the traffic arrivals described above, the credit evolution of class $A$ is shown in Fig. \ref{fig:waiting_tightness}. The transmission of the packet $l_2$ of flow $f$ can only start at time $s_5$. As a result, assuming that $T^{\text{var,max}}_{ij}=0$, the CBFS response time of this packet based on this example is:
\begin{align}
S_{\mathrm{exp}}&=\frac{1}{c-r}\Bigl(b+\frac{r \bar{L}}{c} \Bigr)+T_{s_1\rightarrow s_4}+\frac{rT_{s_1\rightarrow s_4}}{c-r}+\frac{l_2}{c}\nonumber \\&=
\frac{1}{c-r}\Bigl(b+\frac{r \bar{L}}{c}+\bar{L}^A \Bigr)+
\frac{1}{c-r}\frac{L^A(I^A-S^A)}{I^A}+\frac{l_2}{c}.
\end{align}
The theoretical bound on the response time of this packet is,
\begin{align}
&S(f,i,j,A)=T^A+ \frac{b^{\mathrm{tot},A}- \psi^{f}}{R^A}+ \frac{\psi^{f}}{c}\nonumber \\&
=\frac{1}{c-r}\Bigl(b+\frac{r \bar{L}}{c}+\bar{L}^A \Bigr)+
\frac{1}{c-r}\frac{(I^A-S^A)}{I^A} (L^A+l_2- \psi^{f})\nonumber \\&+ \frac{\psi^{f}}{c}.
\end{align}

From our assumptions on this example, $\bar{L}^A=\bar{L}$. Since $S(f,i,j,A)$ is proven to be the bound, it holds that
\begin{align}
&S(f,i,j,A) \geq S_{\mathrm{exp}},
\end{align}and for tightness it should hold
\begin{align}
S(f,i,j,A) =S_{\mathrm{exp}}.
\end{align}
Therefore, for tightness, we require that $l_2=\psi^{f}$. 
If the flow is LB-regulated, the bound is tight if $l_2=\psi^{f}=M_{f}$, and if it is LRQ-regulated, the bound is tight with $l_2=\psi^{f}=L_{f}$. In the case of LRQ, $l_2=L_f$ implies $b^{\text{tot},A}=L^A + L_{f}$; i.e. for LRQ there should be at least two flows in queue of class $x$ so that the bound for the CBFS waiting time is achieved (since in case of LRQ flow $f$ has an arrival curve at the CBFS with burstiness $L_f$).
\end{proof}


\begin{lemma}\label{lemma:waiting_time}
Consider the $n^{th}$ packet in the CBFS queue of switch $i$, assuming it belongs to flow $f$ of class $x\in \{A,B\}$, going from switch $i$ to $j$. An upper bound on the waiting time of this packet of flow $f$ in the CBFS of switch $i$ (following the interleaved regulator of $i$) is:
\begin{align}\label{eq:waiting_time}
W(n,f,i,j,x) = T_{ij}^x + \frac{b_{ij}^{\mathrm{tot},x}-\psi^f_n}{R_{ij}^x},
\end{align} \noindent where the parameter $\psi^f_n$ depends on the $n^{th}$ packet that belongs to flow $f$ and the type of regulator, namely, for LRQ: $\psi^f_n = L_f$ and for LB: $\psi^f_n = l_n$, where $l_{n}$ stands for the packet size of the $n^{th}$ packet.
\end{lemma}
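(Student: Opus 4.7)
The plan is to combine the min-plus service curve of the CBFS obtained in Theorem \ref{thm:cbfs_delay} with the max-plus arrival constraints imposed on every flow by the upstream interleaved regulator, in the spirit of \cite{le_boudec_theory_2018}. Let $R(t)$ denote the cumulative class-$x$ input to the CBFS of node $i$, $R^{*}(t)$ its cumulative output, and let $s$ be the start of the class-$x$ backlog period containing $A_n$, so that $R(s)=R^{*}(s)$ and the CBFS is continuously backlogged in class $x$ on $[s,Q_n]$.

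First I would exploit the FIFO property within class $x$ to observe that, just before packet $n$ starts being served, exactly the bits that arrived strictly before it have been output, giving $R^{*}(Q_n^-)=R(A_n)-l_n$. Applying the rate-latency service curve $\beta^x(u)=R^x_{ij}[u-T^x_{ij}]^+$ with anchor $u=s$ then yields $R(A_n)-l_n\geq R(s)+\beta^x(Q_n-s)$, so that
\begin{equation*}
Q_n-s\;\leq\;T^x_{ij}+\frac{R(A_n)-R(s)-l_n}{R^x_{ij}}.
\end{equation*}
Subtracting $A_n-s$ converts this into a waiting-time bound expressed through the net class-$x$ arrival on $(s,A_n]$. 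The degenerate regime $Q_n-s<T^x_{ij}$ is handled trivially since it gives $W(n,f,i,j,x)<T^x_{ij}$, which is dominated by the claimed bound because $\psi^f_n\leq b_f\leq b^{\mathrm{tot},x}_{ij}$.

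Next I would split $R(A_n)-R(s)-l_n$ into the contribution of the tagged flow $f$ and that of the remaining flows $g\in F^x_{ij}\setminus\{f\}$. For each $g\neq f$ the arrival curve $r_g t+b_g$ contributes at most $r_g(A_n-s)+b_g$. For flow $f$ the argument bifurcates. In the LB case, the arrival curve $r_f t+b_f$ bounds the flow-$f$ bits in $(s,A_n]$ by $r_f(A_n-s)+b_f$; peeling off the $l_n$ bits of the tagged packet leaves at most $r_f(A_n-s)+b_f-l_n$ ahead of packet $n$, so the aggregate is at most $r^x(A_n-s)+b^{\mathrm{tot},x}_{ij}-l_n$, i.e.\ $\psi^f_n=l_n$. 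In the LRQ case, iterating the per-packet spacing $A_{j+1}-A_j\geq l_j/r_f$ between consecutive flow-$f$ packets in $(s,A_n]$ shows that the flow-$f$ bits strictly preceding packet $n$ total at most $r_f(A_n-s)$ with \emph{no} burst term; together with $b_f=L_f$ this removes the full $L_f$ from the aggregate burst, giving $\psi^f_n=L_f$. In both cases, substituting back and discarding the resulting nonpositive term $(A_n-s)(r^x/R^x_{ij}-1)$ via the stability condition $r^x\leq R^x_{ij}$ yields exactly Eq.~\eqref{eq:waiting_time}.

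The step I expect to be the main obstacle is the LRQ case: a naive application of the arrival curve $r_f t+L_f$ to flow $f$ would give only the weaker subtraction $l_n$. Obtaining the sharper subtraction $L_f$ requires invoking the max-plus LRQ spacing constraint on a packet-by-packet basis inside $(s,A_n]$ and carefully isolating packet $n$ from its predecessors in flow $f$. It is precisely this interplay between the min-plus service-curve representation of the CBFS and the max-plus packet-level representation of the regulator that produces the improvement over the classical network-calculus bound highlighted in the remark following Theorem \ref{thm:response_CBFS}.
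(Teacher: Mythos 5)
Your proposal is correct and follows essentially the same route as the paper: the paper formalizes your service-curve inversion step as a separate packet-level lemma (there exists $m\leq n$ with $\beta(Q_n-A_m)\leq\sum_{k=m}^{n-1}l_k$) followed by an upper pseudo-inverse of the rate-latency curve, and then performs exactly your per-flow decomposition with the LB/LRQ case split to extract $\psi^f_n$, including the telescoped LRQ spacing that removes the full burst $L_f$. The one caution is that a plain min-plus service curve does not let you anchor at the start of the backlog period (that would require a strict service curve); however, your computation goes through unchanged with the existential anchor $s$ from the service-curve definition, because $R(s)+\beta(Q_n-s)\leq R^{*}(Q_n^-)=R(A_n)-l_n$ together with $\beta\geq 0$ forces $s$ to precede the arrival of packet $n$, which is all your arrival-curve step needs.
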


\begin{proof}
We do the proof for a flow of class $x$ going from switch $i$ to $j$. For the ease of presentation, we drop the indices $i, j, x$ from the terms, $R_{ij}^x$, $T_{ij}^x$, $b_{ij}^{\text{tot},x}$. Assume that, $l_{k}$ stands for the packet size of the $k^{th}$ packet in the AVB FIFO queue at the CBFS.


 From Lemma \ref{lemma:min_service_curve}, the following equation is true for a service curve $\beta(t)$ and $m\leq n$,
 \begin{align}\label{eq:scheduler_del_4}
 \sum_{k=m}^{n-1} l_k \geq \beta(Q_n-A_m).
 \end{align}
Let $\beta^{\uparrow}(t)$ be the upper pseudo-inverse of a service curve \cite{liebeherr_duality_2017} $\beta(t)$, defined as $\beta^{\uparrow}(t) =\sup \{s\geq 0 | \beta(s)\leq t\}=inf \{s\geq 0 | \beta(s)>t\}$. Then, by taking upper pseudo-inverse in both sides of Eq. \eqref{eq:scheduler_del_4},
 \begin{align}\label{eq:scheduler_del_5}
 Q_n-A_m \leq \beta^{\uparrow}\Big(\sum_{k=m}^{n-1} l_k\Big).
 \end{align}
 Therefore, $Q_n$ satisfies,
 \begin{align}\label{eq:scheduler_del_6}
 Q_n \leq \max_{m\leq n} \Bigg\{A_m+\beta^{\uparrow}\Big(\sum_{k=m}^{n-1} l_k\Big)\Bigg\}.
 \end{align} \noindent Let $F(k)$ be the flow id of $k^{th}$ packet and $f$ be the flow id of the head-of-line packet that is currently being served in the CBFS, say packet $n$. Then, we can express the sum of packet lengths as,
 \begin{align}\label{eq:scheduler_del_7}
 \sum_{k=m}^{n-1} l_k = \sum_{f'\neq f} \Bigg[\sum_{k=m}^{n-1} 1_{\{F(k)=f'\}}l_k\Bigg] + \sum_{k=m}^{n-1} 1_{\{F(k)=f\}}l_k.
 \end{align}



 As the input flows are regulated, for flows $f' \neq f$ there is an arrival curve $\alpha_{f'}$ that satisfies,

 \begin{align}\label{eq:scheduler_del_8}
 \sum_{k=m}^{n-1}1_{\{F(k)=f'\}} l_k \leq \alpha_{f'}(A_n-A_m).
 \end{align}
 \noindent Also, for flow $f$ that is being served now, we show that
 \begin{align}\label{eq:scheduler_del_9}
 \sum_{k=m}^{n-1}1_{\{F(k)=f\}} l_k \leq \alpha_{f}(A_n-A_m)-\psi^{f}_n,
 \end{align}
 \noindent where $\psi^{f}_n$ is the parameter defined in the statement of the lemma. For the considered regulators, LB and LRQ, $\alpha_{f}, \forall f$ are of the form $r_f t +b_f$ where for the LRQ $b_f=L_f$. $\psi^{f}_n$ is computed as follows. For the LB, $\sum_{k=m}^{n-1}1_{\{F(k)=f\}} l_k + l_n \leq \alpha_{f}(A_n-A_m)$, thus $\psi^{f}_n=l_n$. For the LRQ, based on its definition,
 \begin{align}\label{eq:scheduler_del_9b}
A_{k+1}-A_{k}\geq \frac{l_{k}}{r_{F(k)}}.
 \end{align} Therefore,
 \begin{align}\label{eq:scheduler_del_10b}
\sum_{k=m}^{n-1}1_{\{F(k)=f\}}  \frac{l_{k}}{r_{f}} \leq A_n-A_m,
 \end{align} or
 \begin{align}\label{eq:scheduler_del_9c}
\sum_{k=m}^{n-1}1_{\{F(k)=f\}}  l_{k} \leq r_f (A_n-A_m).
 \end{align} Since for LRQ, $\alpha_{f}(t)=r_f t +L_f$, we obtain $\psi^{f}_n=L_f$.

 By utilizing Eqs. (\ref{eq:scheduler_del_7}), (\ref{eq:scheduler_del_8}) and (\ref{eq:scheduler_del_9}) in Eq. (\ref{eq:scheduler_del_6}), we obtain,
 \begin{align}\label{eq:scheduler_del_10}
 Q_n \leq \max_{m\leq n} \Bigg\{A_m+\beta^{\uparrow}\Big(\sum_{f'} \alpha_{f'}(A_n-A_m) -\psi^{f}_n\Big)\Bigg\}.
 \end{align}
 By setting $A_n-A_m=t \geq 0$, we obtain,
 \begin{align}
 Q_n - A_n &\leq \sup_{t \geq 0} \Bigg\{-t+\beta^{\uparrow}\Big(\sum_{f'} \alpha_{f'}(t) -\psi^{f}_n\Big)\Bigg\} \nonumber \\
&= \sup_{t \geq 0} \Bigg\{-t+\frac{\sum_{f'} \alpha_{f'}(t) -\psi^{f}_n}{R}+T\Bigg\}
 \\
 &=\sup_{t \geq 0} \Bigg\{-t+\frac{\sum_{f'} \alpha_{f'}(t)}{R}\Bigg\}-\frac{\psi^{f}_n}{R}+T
 \\
 &\leq \frac{\sum_{f'} \alpha_{f'}(0)}{R}-\frac{\psi^{f}_n}{R}+T
 \\\label{eq:scheduler_del_11}
 &=\frac{b^{\mathrm{tot}}}{R}-\frac{\psi^{f}_n}{R}+T.
 \end{align}

\begin{figure}
	\centering
	\includegraphics[width=1 \linewidth]{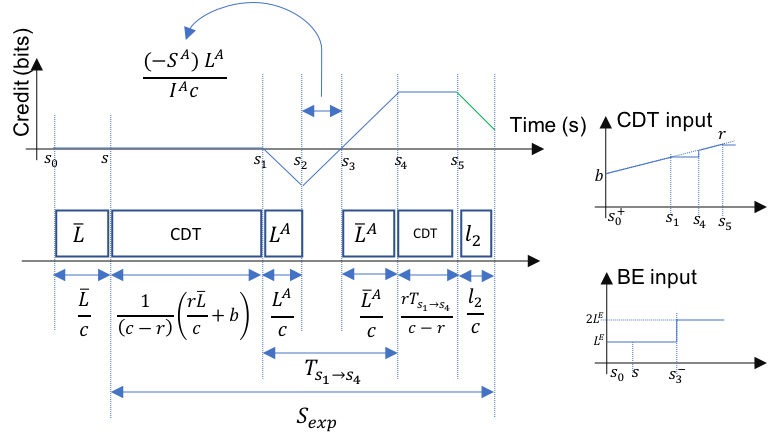}
	\caption{Scenario to have tight bound on response time of CBFS for class A}
	\label{fig:waiting_tightness}
\end{figure}

The right part of Eq. \eqref{eq:scheduler_del_11} gives $W(n,f,i,j,x)$ in the statement of the lemma.

\end{proof}


\begin{lemma}\label{lemma:min_service_curve}
If $\beta$ is a service curve, then for each packet $n$, there exists an $m \leq n $ such that, $\beta(Q_n - A_m) \leq \sum_{k = m}^{n-1} l_{k}$, where $l_k$ is the size of the $k^{th}$ packet.
\end{lemma}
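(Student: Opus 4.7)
The plan is to derive the packet-level statement from the standard min-plus definition of a service curve. Introduce the cumulative arrival function $R(t)=\sum_{k:\,A_k\le t}l_k$ and the left-continuous cumulative output $R^*(t)$ so that $R^*(Q_n)=\sum_{k=1}^{n-1}l_k$, i.e.\ just before packet $n$ begins transmission, exactly the first $n-1$ packet sizes have been served. The service-curve hypothesis $R^*\ge R\otimes\beta$ applied at $t=Q_n$ yields the existence (assuming the infimum is attained, which is standard under the usual wide-sense increasing, lower-semi-continuous hypothesis on $\beta$) of some $u^*\le Q_n$ with
\begin{equation*}
R^*(Q_n)\;\ge\;R(u^*)+\beta(Q_n-u^*).
\end{equation*}

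Next I would rule out the degenerate case $u^*\ge A_n$: if this held, then $R(u^*)\ge\sum_{k=1}^n l_k$, which combined with $R^*(Q_n)=\sum_{k=1}^{n-1}l_k$ and $\beta\ge 0$ would give $l_n\le 0$, a contradiction. Hence $u^*<A_n$, and one may take $m$ to be the smallest index in $\{1,\dots,n\}$ with $A_m>u^*$; in particular $A_{m-1}\le u^*<A_m$ (with the convention $A_0=-\infty$, so that $m=1$ when $u^*<A_1$). By construction $R(u^*)=\sum_{k=1}^{m-1}l_k$. Substituting and rearranging the service-curve inequality gives $\sum_{k=m}^{n-1}l_k\ge\beta(Q_n-u^*)$, and since $u^*<A_m$ implies $Q_n-A_m<Q_n-u^*$, monotonicity of $\beta$ finishes the proof:
\begin{equation*}
\beta(Q_n-A_m)\;\le\;\beta(Q_n-u^*)\;\le\;\sum_{k=m}^{n-1}l_k.
\end{equation*}

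The main (and only) technicality is the bookkeeping around one-sided limits of $R$ and $R^*$ at packet epochs, together with ensuring that the infimum in the min-plus convolution is genuinely attained so that the argument can name an actual $u^*$ rather than a near-achiever. Both points are routine in the network calculus framework used throughout the paper: taking all cumulative counters left-continuous and assuming $\beta$ wide-sense increasing and lower-semi-continuous suffices, and otherwise an $\varepsilon$-argument followed by passage to the limit produces the same $m$.
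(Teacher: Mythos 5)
Your proposal is correct and follows essentially the same route as the paper's proof: apply the min-plus service-curve definition at the instant packet $n$ starts transmission, take $m$ to be the index of the first packet arriving after the witness time $s$ (so the cumulative arrivals there equal $l_1+\cdots+l_{m-1}$ while the output at $Q_n$ equals $l_1+\cdots+l_{n-1}$), and conclude by monotonicity of $\beta$. Your treatment is in fact slightly more careful than the paper's on two points --- the attainment of the infimum in $R\otimes\beta$ and the exclusion of the case $m>n$ (the paper dismisses it via a somewhat informal ``$\beta(Q_n-A_m)<0$'' argument, whereas your $l_n\le 0$ contradiction is cleaner) --- but these are refinements of the same argument, not a different one.
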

\begin{proof}
From the definition of a service curve \cite{le_boudec_network_2001}, we have\\
\begin{equation}
\forall~t,~\exists~s: O(t)~\geq~N(s)~+~\beta(t-s), \label{eq:min_service_def}
\end{equation}
\noindent where $O(t)$ is the number of bits that have been served until time $t^-$ and $N(t)$ is the number of bits that have arrived until time $t$ and $\beta(\cdot) \geq 0$.
\noindent Let $m$ such that, $A_{m-1} < s \leq A_{m}$, then,
\begin{equation}
N(s) = l_1+ \hdots + l_{m-1}. \label{eq:beta_m}
\end{equation}
\noindent Then, for some $\delta \geq 0$, we have
\begin{align}
s &= A_m - \delta\label{eq:def_delta}\\
\implies \beta(t - s) &= \beta(t - A_m + \delta) \geq \beta(t - A_m).\label{eq:beta1}
\end{align}
By replacing Eq. \eqref{eq:def_delta} and \eqref{eq:beta1} in \eqref{eq:min_service_def}, we obtain:
\begin{equation}
O(t)~\geq~N(A_m - \delta)~+~\beta(t-A_m).\label{eq:after_putting_m}
\end{equation}
By setting $t = Q_n^+$, we obtain $t$ as the time instant we begin to serve packet $n$. This means, that all packets before $n$ have already been served. Thus,
\begin{align}
O(t)~&=~l_1 +\hdots+l_{n-1}.\label{eq:output_t}
\end{align}
\noindent By replacing Eqs. \eqref{eq:beta_m}, \eqref{eq:output_t} in \eqref{eq:after_putting_m}, we obtain:
\begin{align}
l_1 +\hdots+l_{n-1}~&\geq~l_1 + \hdots + l_{m-1} + \beta(Q_n - A_m)\nonumber\\
\implies \beta(Q_n - A_m)~&\leq l_m + \hdots + l_{n-1}.
\end{align}
If $m > n$, we obtain $\beta(Q_n - A_m) < 0$, which is a contradiction.
\end{proof} 

\section{Proof of Theorem \ref{thm:response_time}}\label{app:3}
\begin{proof}
$H(f,i,j,k,x)$ is an upper bound on $E_n - D'_n$ for all packets $n$ of flow $f$. Now
 
	\begin{align}
	E_n - D'_n = (E_n-A_n) -(D'_n-A_n),
	\end{align}
and 
\begin{align}
	E_n - A_n \leq C(i,j,k,x).
	\end{align}
Therefore, for a packet $n$ of flow $f$:
\begin{align}
	E_n - D'_n \leq C(i,j,k,x) - \inf_{n\in N_f}(D'_n-A_n),
	\end{align}
now
\begin{align}
	D'_n -A_n \geq \frac{M_f}{c_{ij}} + T_{ij}^{\mathrm{var,min}}+T_{ij}^{\text{proc, min}}.
	\end{align}
 
%
%
\end{proof}

\end{document}